\documentclass[12pt]{article}

\usepackage[margin=1in]{geometry}

\usepackage{amsmath, amsthm, amsfonts}
\usepackage{mathtools}

\usepackage{url}
\usepackage{todonotes}
\setuptodonotes{inline}

\usepackage{enumerate}

\usepackage{tikz-cd}
\usepackage{amssymb}

\newtheorem{theorem}{Theorem}
\newtheorem{corollary}[theorem]{Corollary}
\newtheorem{lemma}[theorem]{Lemma}

\newcommand{\FO}{\mathsf{FO}}

\newcommand{\wh}[1]{\widehat{#1}}

\renewcommand{\phi}{\varphi}

\usepackage{xcolor}

\newtheorem{definition}[theorem]{Definition}

\newcommand{\dist}{\mathrm{dist}}

\theoremstyle{definition}

\newcommand{\PP}{\mathcal{P}}

\newcommand{\C}{\mathcal{C}}
\newcommand{\D}{\mathcal{D}}
\newcommand{\E}{\mathcal{E}}
\newcommand{\F}{\mathcal{F}}

\newcommand{\Q}{\mathcal{Q}}

\newcommand{\GG}{\mathcal{G}}

\newcommand{\T}{\mathsf{T}}
\newcommand{\RT}{\mathsf{R}}
\newcommand{\ST}{\mathsf{S}}

\newcommand{\N}{\mathbb{N}}

\newcommand\blfootnote[1]{%
  \begingroup
  \renewcommand\thefootnote{}\footnote{#1}%
  \addtocounter{footnote}{-1}%
  \endgroup
}

\newcommand{\ERCagreement}{
\blfootnote{\noindent
{\begin{minipage}[t]{0.70\textwidth}
\vspace{-15pt}
\small MP is supported by the project {\sc{BOBR}} that have received funding from the European Research Council (ERC) under the European Union's Horizon 2020 research and innovation programme (grant agreement No948057). 

JG is supported by the Polish National Science Centre SONATA-18 grant number 2022/47/D/ST6/03421.
\vspace{10pt}
 \end{minipage} \hspace{5pt}
 \begin{minipage}{.2\textwidth} \vspace{5pt} \hspace{15pt}
 \includegraphics[width=0.8\textwidth]{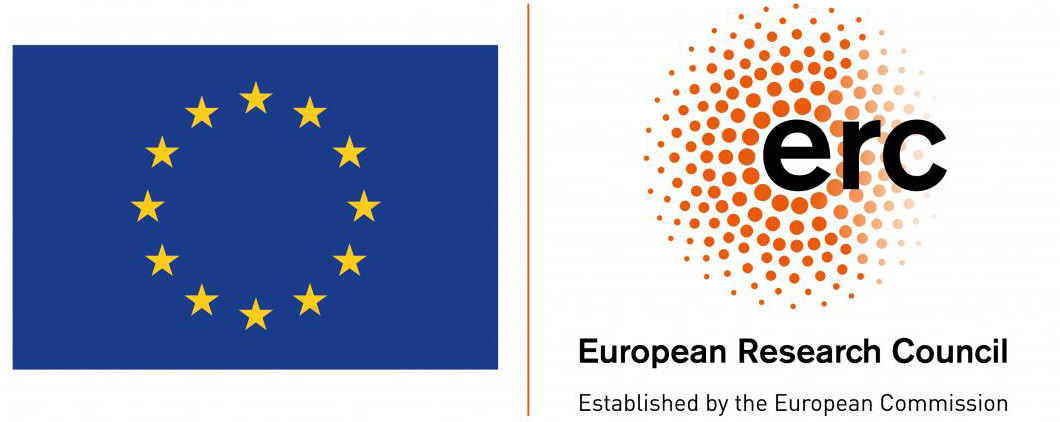}\end{minipage}\hfill}}}

\title{3D-grids are not transducible from planar graphs}

\author{
Jakub Gajarský\thanks{University of Warsaw, Poland. Email: \tt{gajarsky@mimuw.edu.pl}.}\and
Michał Pilipczuk\thanks{University of Warsaw, Poland. Email: \tt{michal.pilipczuk@mimuw.edu.pl}.}\and
Filip Pokrývka\thanks{No affiliation. Email: \tt{fpfilko@gmail.com}.}}

\date{}

\begin{document}
\maketitle

\ERCagreement

\begin{abstract}
    We prove that the class of 3D-grids is cannot be transduced from planar graphs, and more generally, from any class of graphs of bounded Euler genus. 
    To prove our result, we introduce a new structural tool called slice decompositions, and show that every graph class transducible from a class of graphs of bounded genus is a perturbation of a graph class that admits slice decompositions. 
\end{abstract}


\section{Introduction}

A recent topic in  structural and algorithmic graph theory is the study of graph classes through the lens of \emph{first-order transductions}. These are logic-based graph transformations that allow us to create new graphs from old ones, and by extension, new graph classes from old ones. 
Roughly speaking, a transduction can be described as follows. We start with a graph $G$ and color its vertices with a bounded number of color in an arbitrary fashion. Then we use a first-order formula $\psi(x,y)$ to define a new graph $\psi(G)$ on the same vertex set as $G$ and with edge set $\{ uv~:~G\models \psi(u,v)\}$; here we note that the formula $\psi$ can speak about the colors introduced in the first step. Finally, we take an induced subgraph of $\psi(G)$. A transduction $\T$ is therefore determined by a number of colors used and formula $\psi$.Since the coloring step and taking induced subgraph are non-deterministic operations, a transduction actually produces a set of graphs from a source graph $G$ instead of a single graph. We  therefore write $H \in \T(G)$ to denote that $H$ is a result of transduction $\T$ applied to $G$ (instead of $H = \T(G)$). The notion of transduction is naturally extended to graph classes -- the graph class $\T(\C)$ is obtained by applying $\T$ to every graph in $\C$ and taking the union of the results. Finally, we say that a graph class $\D$ is \emph{transducible} from a class $\C$ if there exists a transduction $\T$ such that $\D \subseteq \T(\C)$; we denote this situation by $\D \preceq_{\FO} \C$. This notation is justified by the fact that if $\D$ is transducible from $\C$, then every graph from $\D$ can be encoded in some graph from $\C$ by a first-order formula, and therefore we can think of $\D$ being logically `simpler' than $\C$. We note that in some cases we have $\D \preceq_{\FO} \C$ and $\C \preceq_{\FO} \D$; in this case $\C$ and $\D$ are thought to be equally rich from the perspective of FO logic.

In this work, we are interested in the following question: For which classes $\C$ and $\D$ of graphs we have that $\D$ is transducible from $\C$?
The question is closely related to the
\emph{first-order model checking problem}. In this problem we are given as input a finite graph $G$ and a first-order sentence $\phi$, and the task is to determine whether $G \models \phi$. The relevance of transductions for this problem stems from two different ways how we can specify (families of) graph classes via transductions:
\begin{enumerate}[(i)]
    \item For a graph class $\C$ and a transduction $\T$, we can consider the graph class $\T(\C)$. Here one should imagine that $\C$ is a graph class that we understand well, and we want to understand $\T(\C)$.  The underlying idea is that if $\C$ has nice (structural, logical, algorithmic) properties, then it is reasonable to expect some of these properties will carry over to $\T(\C)$. Note that this also gives us information about all graph classes $\D$ for which we have $\D \subseteq \T(\C)$, that is, all classes with $\D \preceq_{\FO} \C$.
    In fact, one usually considers not just a single graph class $\C$ and single transduction, but whole families of graph classes and all transductions. An important example of this
    situation is the case when we start with the family of \emph{nowhere dense} graph classes, and consider all graph classes that can be obtained from such classes by an arbitrary transduction. Such classes are known as \emph{structurally nowhere dense} graph classes. A recent result of Dreier, M\"ahlmann and Siebertz~\cite{str_nwd} established that the first-order model checking can be solved efficiently on such graph classes. 
    \item For a class $\D$ of graphs, we consider the family all classes $\C$ such that for \emph{no} transduction $\T$ we have that $\D \subseteq \T(\C)$, that is, we study graph classes $\C$ such that $\C \not\preceq_{\FO} \D$.
    Prominent examples of specifying families of graph classes via this approach are \emph{monadically stable} graph classes (here $\D$ is the class of all \emph{half-graphs}) and \emph{monadically dependent} graph classes (here $D$ is the class of all graphs).
    Another way of looking at specifying graph classes in this way is that for every transduction $\T$ there is a graph $G \in \D$ such that $G \not\in \T(\C)$. Studying graph classes determined in this way therefore boils down to studying what graphs \emph{cannot} be transduced from $\C$. The rationale behind this approach is that understanding these kind of obstructions for $\C$ can lead to establishing structural results about $\C$, and consequently to algorithmic results. These kind of ideas played a role in recent important results~\cite{tww_ordered,MC_stable,dependent_dichotomy}.
\end{enumerate}


Currently, in the realm of the first-order model checking problem, the focus is on monadically dependent graph classes, that is, graph classes $\C$ such that $\GG \not\preceq_{\FO} \C$, where $\GG$ is the class of all graphs. It has been conjectured~\cite{NIPconjecture,bd_deg_interp2} that this is all that is needed to obtain an efficient model checking algorithm. However, the general question when $\D \preceq_{\FO} \C$ for various graph classes $\D$ simpler than $\GG$ is well-motivated and can lead to interesting progress in structural graph theory, as we now argue.

If we want to prove that a graph class $\D$ is transducible from a graph class $\C$, it is enough to exhibit a transduction $\T$ such that $\D \subseteq \T(\C)$. While this is conceptually simple, it requires us to understand both $\C$ and $\D$ -- we need expose enough structure and simplicity in graphs from $\D$ and at the same time enough richness and complexity in graphs from $\C$, so that we can encode each $G \in \D$ in a graph from $\C$.

On the other hand, to prove that $\D$ is not transducible from $\C$, one typically proceeds as follows. Let $\Pi$ be a property a graph class can have (for example, having bounded treewidth or having bounded clique-width). We will think of $\Pi$ as the family of all graph classes that have property $\Pi$ and write $\C \in \Pi$ instead of ``$\C$ has property $\Pi$". We say that a class $\Pi$ is a \emph{transduction ideal} if $\C \in \Pi$ implies that $\T(\C) \in \Pi$ for all transductions. We also sometimes say that $\Pi$ is a \emph{transduction invariant property}. Important examples of such properties are having bounded shrub-depth~\cite{shrubdepth}, rank~\cite{rank}, clique-width~\cite{cw}, twin-width~\cite{tww1journal}, flip-width~\cite{flipwidth}, or being monadically stable~\cite{shelah1971stability} or monadically dependent~\cite{mdependence}. Non-examples are 
 having bounded treewidth or being planar. Then, to show for example that the class of all 2-dimensional grids is not transducible from trees, where we can take $\Pi$ to be the transduction ideal ``having bounded clique-width". 
However, it is not possible to use currently known transduction ideals to answer many (even basic) non-transducibility questions (see below). Asking such questions will presumably lead to identifying new transduction ideals.
Given how important the currently known transduction ideals are in structural graph theory and model theory, this could lead to discoveries of new important graph theoretic notions and enrich our understanding of the connection between structural graph theory and (finite) model theory.




While, as explained above, studying graph classes and relationships between them using transductions is well-motivated (see also the new survey~\cite{survey_lens}),
 our current understanding of the relation $\preceq_{\FO}$ is fairly limited. One easily checks that the relation $\preceq_{\FO}$ is a quasi-order (known as the \emph{first-order transduction quasi-order}), and its order-theoretic properties together with some other fine-grained questions were studied in~\cite{trans_quasi-order}.
However, as mentioned above, many basic questions are currently unanswered. For example:
\begin{itemize}
    \item Is it true that for every $k$ the class of all graphs of graphs of treewidth $k+1$ is transducible from the class of graphs of treewidth at most $k$? The answer is known  for pathwidth \cite{trans_quasi-order}.
    \item Is it true that the class of all toroidal graphs is transducible from the class of planar graphs?
    \item Is it true that for every $k$ the class of graphs of treewidth $k$ is  transducible from planar graphs? 
\end{itemize}


Presently, probably the biggest bottleneck to obtaining a fine understanding of the first-order transduction quasi-order is the lack of a robust toolbox for proving non-tranducibility results that extends beyond commonly studied transduction invariant properties. Overcoming this lack of results and techniques is the motivation behind our research.


\paragraph{Our results.} Our main result is the following. In its statement, a \emph{cube} is a 3-dimensional grid with all sides of equal length.

\begin{theorem}
\label{thm:main}
    The class of all cubes is not transducible from any class of graphs of bounded genus.
\end{theorem}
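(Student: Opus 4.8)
The plan is to exploit the announced structural tool — slice decompositions — together with the dichotomy that every class transducible from a class of bounded Euler genus is a perturbation of a class admitting slice decompositions. Concretely, suppose for contradiction that the class of all cubes is transducible from some class $\C$ of bounded genus. Then the class of cubes, being a subclass of $\T(\C)$, would itself be (up to a bounded perturbation, i.e.\ a bounded recoloring followed by flips) a class admitting slice decompositions. Since slice decompositions should be robust under bounded perturbations — or at least under the perturbations produced by the dichotomy — it suffices to show that \emph{cubes do not admit slice decompositions}, and in fact that this remains true after any bounded perturbation. This reduces the theorem to a purely combinatorial statement about the $n\times n\times n$ grid.

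The first step is therefore to recall precisely what a slice decomposition is and to extract from it a concrete "width"-type parameter or separation property that it bounds. The intuition (matching the bounded-genus setting) is that a slice decomposition organizes a graph into a linear or tree-like sequence of bounded-complexity "slices" whose interfaces have bounded size in some logical or combinatorial sense, so that the host graph looks, locally along the decomposition, like a bounded-genus surface cut into pieces. Whatever the exact definition, the key consequence I would aim to isolate is that graphs admitting slice decompositions cannot contain, as suitably robust substructures, the "cross-sections" forced by a $3$-dimensional grid: an $n\times n\times n$ cube has the property that any way of linearly ordering its vertices produces, at some point, a separator of size $\Omega(n^2)$ that moreover cannot be "flattened" onto a bounded-genus surface — unlike a $2$-dimensional grid, whose separators are only of size $\Theta(n)$ and which does embed in the plane. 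So the target lemma is: in any slice decomposition of an $n$-vertex graph, every slice separates the graph into parts whose interaction has bounded "genus-like" complexity, whereas the cube forces unbounded such complexity.

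The main technical work — and the step I expect to be the principal obstacle — is the genus-theoretic argument showing that the cube's quadratic cross-sections are incompatible with the bounded-genus flavor of slice decompositions, \emph{even after a bounded perturbation}. Bounded perturbations (a constant number of colors plus flips on color classes) are dangerous here precisely because flips can add or remove complete bipartite pieces, which in principle can fake high genus or hide it; so I would need to argue that the specific quadratic-size, "three-dimensionally linked" structure inside a cube — say, a large family of pairwise "crossing" cycles, or a large grid minor in every cross-section that cannot be simultaneously routed — survives perturbation in a form that still contradicts the slice-decomposition bound. Technically I would want to find, inside the cube, a perturbation-robust witness of unbounded genus or unbounded "local grid complexity" transverse to every linear order of the vertex set; a natural candidate is a blow-up/subdivision argument showing that a cube contains, after any bounded recoloring, a large subdivided clique or large grid lying "across" any slice, which no bounded-genus surface slice can accommodate. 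Assembling this witness and pushing it through the perturbation is where the real effort lies; the reduction steps before it are formal once the dichotomy theorem and the definition of slice decompositions are in hand.
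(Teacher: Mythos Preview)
Your high-level plan is the right shape, but there is a genuine gap at exactly the point you flag as the main obstacle, and the paper sidesteps it rather than pushing through it. You propose to apply Theorem~\ref{thm:slices} directly, obtaining that cubes are $k$-flips of graphs admitting slice decompositions, and then to argue that the cube obstruction is ``perturbation-robust''. This is hard for precisely the reason you note: flips can create or destroy large-treewidth witnesses, and your speculative plan of finding a subdivided clique or grid ``across every slice'' does not obviously survive a $k$-flip. The paper never proves any such robustness. Instead it proves (Theorem~\ref{thm:reduction_bd_range}) that if $\Q$ is transducible from $\C$ at all, then it is already transducible by a transduction of \emph{bounded range}: the argument (Section~\ref{sec:reduction}) exploits the bounded degree and connectivity of cubes to show that any $k$-flip structure producing $Q_N$ from $G$ either leaves a large induced cube inside $G$ or forces $G$ to have bounded diameter, so the flip itself can be realised by a bounded-range transduction. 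Composing, one obtains slice decompositions for the cubes \emph{themselves}, with no perturbation left to undo.

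The second gap is the combinatorial obstruction. Slice decompositions do not bound ``genus-like complexity of interfaces''; they bound the clique-width of any $k$ consecutive slices, and slices may have unbounded size, so your $\Omega(n^2)$-separator heuristic is not directly usable. The paper's obstruction is a theorem of Berger, Dvo\v{r}\'ak and Norin (Theorem~\ref{thm:grids_dvorak}): for every $t$ there is $N$ such that in \emph{every} bipartition $A_1\cup A_2$ of $V(\wh{Q}_N)$, one side induces treewidth greater than $t$. Given a slice decomposition $V_1,\dots,V_\ell$, take $A_1$ to be the union of the even-indexed slices and $A_2$ the odd-indexed ones; each is a disjoint union of graphs of bounded clique-width and bounded degree, hence (Theorem~\ref{thm:cw_sparse_tw}) bounded treewidth --- a contradiction. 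The passage from $Q_N$ to the diagonal-enriched $\wh{Q}_N$ needed for the cited theorem is handled by a range-$3$ transduction (Lemma~\ref{lem:diag_cubes}), so the bounded-range hypothesis is preserved.
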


To prove Theorem~\ref{thm:main}, we introduce a new tool called a  \emph{slice decomposition} of a graph. Fix a function $f:\N \to \N$ and let $G$ be a graph. A slice decomposition of $G$ guarded by $f$ is a partition $V_1 \cup \ldots V_\ell$ of $V(G)$ with the following two properties:
\begin{itemize}
    \item Every edge of $G$ is either within some part $V_i$ or or between consecutive parts $V_i$, $V_{i+1}$.
    \item Any $k$ consecutive parts induce a subgraph of $G$ of clique-width at most $f(k)$.
\end{itemize}
We say that a class $\C$ of graphs \emph{admits slice decompositions} if there exists a function $f$ such that every $G\in\C$ has a slice decomposition guarded by $f$.

Our structural result for graph classes interpretable in classes of bounded genus is Theorem~\ref{thm:slices} below. In its statement, a graph $H$ is a $k$-flip of graph $G$ if $H$ can be obtained from $G$ by partitioning $V(G)$ into at most $k$ classes and complementing (flipping) the adjacency between some pairs of classes.
\begin{theorem}
\label{thm:slices}
    Let $\C$ be a class of graphs of bounded genus and let $\T$ be a transduction. Then there exists a class $\D$ of graphs that has slice decompositions and number $k \in N$ such that every graph in $\T(\C)$ is a $k$-flip of a graph from $\D$.
\end{theorem}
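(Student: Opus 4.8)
The plan is to reduce to the case where $\T$ consists of a single colouring step, a single application of a formula $\psi(x,y)$, and a restriction step (the general case reduces to this, and the colouring and restriction will be accommodated at the very end), and then to analyse the graph $H := \psi(G^+)$, where $G \in \C$ has Euler genus at most $g$ and $G^+$ is $G$ together with an arbitrary colouring by a fixed finite palette. First I would split the edges of $H$ into \emph{short} edges, joining pairs at distance at most $2R$ in $G$, and \emph{long} edges, joining pairs at larger distance, where $R$ is the Gaifman locality radius of $\psi$. By Gaifman's locality theorem (together with the standard composition lemma for first-order logic over disjoint unions), there is a bound $q$ such that for every long pair $u,v$ the truth of $\psi(u,v)$ depends only on the quantifier-rank-$q$ first-order types of the radius-$R$ balls around $u$ and around $v$ in $G^+$, together with a bounded amount of global information about $G^+$ (the truth values of finitely many basic local sentences). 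There are only finitely many such types $s_1,\dots,s_p$, and each is defined by a local formula $\tau_{s_i}(x)$; writing $b(s,s')\in\{0,1\}$ for the corresponding long-pair verdicts, I would set $\psi'(x,y) := \psi(x,y)\oplus\bigvee_{b(s,s')=1}\bigl(\tau_s(x)\wedge\tau_{s'}(y)\bigr)$ and $D := \psi'(G^+)$. Then $D$ has no long edges (they have been XOR-ed away), and $H$ is obtained from $D$ by the $p$-flip that toggles adjacency between the type-classes $\{u:\tau_s(u)\}$ and $\{u:\tau_{s'}(u)\}$ whenever $b(s,s')=1$; as flips are involutions, $H$ is a $p$-flip of $D$, which settles the flip part with $k := p$.

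It then remains to equip $D$ with a slice decomposition guarded by a function of $g$, $\psi$, and the palette size. For this I would use a breadth-first layering of $G$ — formally $\mathrm{layer}(v) := \dist_G(v,r_v)$ where $r_v$ is a fixed root in the component of $v$, overlaid over components — and group the layers into consecutive blocks $V_1,V_2,\dots$ of width $\Theta(R)$. Since every edge of $D$ is short, a triangle-inequality computation on layer indices shows that each edge of $D$ lies inside one block or between two consecutive blocks, giving the first property. For the second, I would take any $k$ consecutive blocks, let $W$ be their union ($O(Rk)$ consecutive layers) and let $W''\supseteq W$ be $W$ padded by $O(R)$ further layers on each side. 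The key point is that for $u,v\in W$ the entire evaluation of $\psi'(u,v)$ — the radius-$R$ balls consulted by the Gaifman-local parts of $\psi$, the radius-$R$ types, and the predicate $\dist(x,y)\le 2R$ — takes place inside $W''$, and the distances, balls, and types involved are computed identically in $G^+$ and in $G^+[W'']$. Hence $D[W]$ is obtained from $G^+[W'']$ by a fixed first-order transduction (mark $W$ and the truth values of the relevant basic local sentences, apply a local surrogate of $\psi'$, restrict to $W$). Now $G^+[W'']$ is an induced subgraph of a bounded-genus graph on boundedly many consecutive breadth-first layers, hence has treewidth bounded in terms of $g$ and $Rk$ (bounded local treewidth of bounded-genus classes), hence bounded clique-width; since first-order transductions preserve bounded clique-width, $D[W]$ has clique-width at most some $f(k)$ depending only on $g$, $\psi$, and the palette size. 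Finally I would let $\D$ be the closure under induced subgraphs of the family of all graphs $D$ obtained in this way (over all $G\in\C$, all colourings, all choices of roots, and all truth assignments to the finitely many relevant global sentences); then $\D$ admits slice decompositions guarded by $f$, and every graph in $\T(\C)$, being $\psi(G^+)$ restricted to some vertex subset $S$, is a $p$-flip of the corresponding $D$ restricted to $S$, hence a $p$-flip of a member of $\D$. So $k := p$ and $\D$ witness the theorem.

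The hard part will be the second property of the slice decomposition: showing that the short-edge part of $\psi(G)$, once restricted to a bounded window of breadth-first layers, is governed by a bounded-treewidth (hence bounded-clique-width) graph. This amounts to marrying Gaifman normal form with the bounded-local-treewidth property of bounded-genus graphs and checking, line by line, that every distance, ball, and local type that $\psi$ can see from inside the window is already determined inside a slightly larger window; getting the padding and the surrogate formula exactly right is the delicate part. A secondary subtlety, handled above by folding the flip into $\psi'$, is to produce a single bounded-arity flip that simultaneously erases all long edges and leaves the short-edge part untouched.
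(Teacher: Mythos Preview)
Your proposal is correct and follows the same two-stage architecture as the paper (first reduce to bounded-range via a flip, then exhibit a layering whose windows have bounded clique-width), but the two implementations differ in both stages.

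For the flip reduction, the paper simply invokes the black-box result that every transduction factors as a bounded-range transduction followed by a $k$-flip (their Theorem~\ref{thm:gaifman_bd_range}); you instead re-derive this by hand from Gaifman normal form, partitioning vertices by the first-order type of their $R$-ball and XOR-ing out the long edges. Your derivation is exactly the argument underlying that cited theorem, so this is a matter of packaging rather than substance.

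The more interesting divergence is in the layering. The paper builds its slices from the \emph{product structure theorem}: $G \subseteq H \boxtimes P$ with $\mathrm{tw}(H)\le t$, and the slices are fibres over the path coordinate, so any window of $m$ consecutive slices sits inside $H \boxtimes P_m$ and has treewidth at most $m(t+1)-1$. You instead use a BFS layering of $G$ and appeal to the classical Baker/Eppstein fact that bounded-genus graphs have bounded treewidth on any bounded number of consecutive BFS layers. Both layerings have the required property (edges span at most one layer; bounded windows have bounded treewidth), so both proofs go through. Your route is more elementary---it avoids the product structure theorem, which is a substantially deeper and more recent result than the layered-treewidth bound you actually need---while the paper's route gives cleaner explicit bounds and plugs directly into their Lemma~\ref{lem:tw_product}. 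One caveat on terminology: what you need is not literally ``bounded local treewidth'' (bounded treewidth of balls) but the stronger layered statement (bounded treewidth of consecutive BFS layers); for bounded-genus graphs this holds, but you should cite it as such (layered treewidth, or Baker's technique for bounded genus) rather than as local treewidth.
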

The proof of Theorem~\ref{thm:slices} relies on the product structure theorem for graphs of bounded genus and Gaifman's locality theorem. We note that our proof of Theorem~\ref{thm:main} does not use the full power of Theorem~\ref{thm:slices}, as we only use the fact that the clique-width of each part $V_i$ in the slice decomposition is bounded. Theorem~\ref{thm:slices} can therefore be useful for proving stronger non-transducibility results.


\subsection{Outline of the proof}

Our proof of Theorem~\ref{thm:main} can be broken down into the following steps:
\begin{enumerate}
    \item We will show that if $\C$ is such that $\Q \subseteq \T(\C)$ for some transduction $\T$, then there is a transduction $\T'$ of bounded range such that $\Q \subseteq \T'(\C)$ (Section~\ref{sec:reduction}).
    \item We will show that if $\T$ is a transduction of bounded range and $\C$ is a class of graphs of bounded Euler genus, then $\T(\C)$ admits slice decompositions for some function $f$ (Section~\ref{sec:decompositions}). 
    \item Assuming that cubes can be transduced from a class of graphs of bounded Euler genus, by taking $\C$ as the class of planar graphs in the previous two statements we get that the class of cubes admits slice decompositions for some function $f$.
    \item From a result of Berger, Dvor{\'{a}}k and Norine~\cite{Dvorak_grids} we deduce that the class of all 3D-grids does not admit slice decompositions, which gives us a contradiction (Section~\ref{sec:finish}).
\end{enumerate}

\section{Preliminaries}

For two integers $m,n$ with $m < n$ we denote by $[m,n]$ the set $\{m, m+1, \ldots, n\}$. We denote by $[n]$ the set $[1,n]$.

\subsection{Logic}
\label{sec:logic}

We assume familiarity with basic notions of the first-order logic: signatures, formulas, quantifier rank.

Graphs are modeled as structures over a single binary predicate symbol $E$. 
Often, we will speak of \emph{colored} graphs -- these are graphs equipped with finitely many colors modeled as unary predicates. To each class of colored graphs, we associate a signature $\Sigma$ which consists of the binary predicate symbol $E$ and finitely many unary predicate symbols. Unless explicitly specified otherwise, all classes of structures considered in this paper will be (colored) graphs, and we will therefore use letters $G,H, \ldots$ for structures and $V(G), V(H), \ldots$ for their underlying universes.

\paragraph{Interpretations and transductions.}
We say that a formula $\psi(x,y)$ is \emph{antireflexive} and \emph{symmetric} if for all graphs $G$ and $u,v \in V(G)$ we have $G \not\models \psi(u,u)$ and $G \models \psi(u,v) \Leftrightarrow G \models \psi(v,u)$. 
For a antireflexive and symmetric formula $\psi(x,y)$ and a graph $G$ we denote by $\psi(G)$ the graph on vertex set $V(G)$ and edge set 
$$ E(\psi(G)) = \{ uv~|~G \models \psi(u,v) \} $$
We then say that $\psi(x,y)$ is an \emph{interpretation formula} and that the graph $\psi(G)$ is \emph{interpreted in $G$}.

A \emph{transduction} is an operation determined by a number $m$ and an interpretation formula $\psi(x,y)$ that from a graph $G$ produces a new graph as follows:
\begin{enumerate}[(i)]
	\item First, vertices of $G$ are marked with (at most) $m$ unary predicates in arbitrary fashion to obtain graph $G^{+}$.
	\item Then, we apply $\psi$ to $G^{+}$ to produce graph $\psi(G)$ (the formula $\psi$ can use the unary predicates introduced in the first step).
	\item Finally, we take an induced subgraph of $\psi(G^{+})$.
\end{enumerate} 
Since the steps $(i)$ and $(iii)$ are not deterministic, transduction actually does not produce a single graph but a set of graphs. If $\T$ is a transduction, we denote by $\T(G)$ the set of graphs it produces, and if $H \in \T(G)$, we say that $H$ is \emph{transduced} from $G$. For a class of graphs $\C$ and a transduction $T$, we set $\T(\C):=\bigcup_{G \in C} \T(G)$.

We say that a formula  $\psi(x,y)$ has \emph{range} $d$ if for every graph $G$ and every $u,v \in V(G)$ we have that $G \not\models \psi(u,v)$ implies $dist_G(u,v) \le d$. This means that when creating the graph $\psi(G)$ from $G$, the formula $\psi(x,y)$ will not create an edge between two vertices that are far away in $G$. We say that a formula  $\psi(x,y)$ has \emph{bounded range} if there exists a bound $d$ such that $\psi(x,y)$ has range $d$. Finally, we say that a transduction $T$ has bounded range, if the interpretation formula used by $\T$ has bounded range.

It is easily verified that a composition of two transductions is a transduction; we denote the composition of transductions $S$ and $\T$ by $\T \circ \ST$. Moreover, if both $\ST$ and $\T$ are of bounded range, then so is $\T \circ S$ (namely, if $\ST$ has range $d_1$ and $\T$ has range $d_2$, then $\T \circ \ST$ has range $d_1d_2$).


\paragraph{Gaifman's theorem}
Several of our results rely on Gaifman's locality theorem.
We say that a first-order formula $\gamma(x_1,\ldots, x_k)$ is $r$-\emph{local} if for every $G$ and every $v_1,\ldots,v_k \in V(G)$ we have that $G \models \gamma(v_1,\ldots,v_k) \Leftrightarrow B^G_r(v_1,\ldots,v_k) \models \gamma(v_1,\ldots,v_k)$ where $ B^G_r(v_1,\ldots,v_k) = \bigcup_{1 \le i \le k} B^G_r(v_i) $.
A \emph{basic local sentence} is a sentence of the form
$$ \tau \>\equiv\>
	\exists x_1 \ldots \exists x_k	\left(\bigwedge_{1 \le i < j \le k} 
		\dist(x_i,x_j) > 2r 
	\land \bigwedge_{1 \le i \le k} \alpha(x_i) \right)
,$$
where $\alpha(x)$ is $r$-local and $\dist(x,y)>2r$ is a formula expressing that the distance between $x$ and $y$ is larger than $2r$. 


\begin{theorem}[Gaifman's locality theorem \cite{gaifman1982local}]\label{thm:Gaifman}
Every first-order formula $\phi(x_1,\ldots,x_k)$ is equivalent to a boolean combination of a $7^q$-local formula $\rho(x_1,\ldots,x_k)$ and basic
local sentences with locality radius $7^q$, where $q$ is the quantifier rank of $\phi$. 
\end{theorem}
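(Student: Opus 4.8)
The plan is to prove Theorem~\ref{thm:Gaifman} by induction on the structure of $\phi$, strengthening the statement so that it is amenable to induction: \emph{for every formula $\phi(x_1,\ldots,x_k)$ of quantifier rank $q$ there is an integer $t\leq 7^q$ such that $\phi$ is equivalent to a Boolean combination of a single $t$-local formula $\rho(x_1,\ldots,x_k)$ and basic local sentences with locality radius at most $t$.} Collapsing several local formulas into one $\rho$ is legitimate because $t$-locality is preserved under Boolean connectives without changing the radius. The base case of atomic formulas is immediate, since atoms are $0$-local. Negation and conjunction are equally routine: a Boolean combination of formulas already in the claimed normal form is again in that form, and the maximum of the finitely many radii involved still respects the bound since the quantifier rank does not decrease.

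The only substantial case is $\phi(\bar x)=\exists y\,\psi(\bar x,y)$, writing $\bar x=(x_1,\ldots,x_k)$. First I would normalize: by the inductive hypothesis $\psi$ is a Boolean combination of an $s$-local formula and basic local sentences $\tau_1,\ldots,\tau_m$ with $s\leq 7^{q-1}$; putting this into disjunctive normal form and using that each $\tau_j$ is a \emph{sentence}, hence commutes with $\exists y$, the task reduces to showing that $\exists y\,\rho(\bar x,y)$ is in the claimed normal form for an $s$-local formula $\rho$ (a conjunction of $s$-local formulas and their negations is $s$-local). I then split the witness according to its distance from $\bar x$:
\[
\exists y\,\rho(\bar x,y)\ \equiv\ \exists y\bigl(\dist(y,\bar x)\leq 2s+1\ \wedge\ \rho(\bar x,y)\bigr)\ \vee\ \exists y\bigl(\dist(y,\bar x)>2s+1\ \wedge\ \rho(\bar x,y)\bigr).
\]
In the first disjunct every vertex that matters lies in $B^G_{3s+1}(\bar x)$, so relativizing $\rho$ to that ball turns the disjunct into a local formula of $\bar x$ of radius $O(s)$; this is the ``local part'' of the output.

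The heart of the argument is the second disjunct. When $\dist(y,x_i)>2s$ for all $i$, the balls $B^G_s(y)$ and $B^G_s(\bar x)$ are disjoint, so by $s$-locality the truth of $\rho(\bar x,y)$ is decided in the disjoint union $B^G_s(\bar x)\sqcup B^G_s(y)$, with $\bar x$ in the first summand and $y$ in the second. By the elementary composition (Feferman--Vaught) property for disjoint unions, the theory of such a union up to quantifier rank $\mathrm{qr}(\rho)$ is determined by the theories of the two summands; hence the second disjunct is equivalent to a finite disjunction $\bigvee_{(\pi,\theta)}\bigl(\lambda_\pi(\bar x)\wedge\exists y\,(\dist(y,\bar x)>2s+1\wedge\mu_\theta(y))\bigr)$, where $\lambda_\pi$ and $\mu_\theta$ range over the finitely many $s$-local formulas pinning down the type of the pointed ball around $\bar x$, respectively around $y$. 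It thus remains to put $E(\bar x):=\exists y\,(\dist(y,\bar x)>2s+1\wedge\mu(y))$ into normal form for a fixed $s$-local $\mu$. Here I would invoke the \emph{scattered witnesses} dichotomy. Let $\sigma_{\geq c}$ be the basic local sentence asserting that there exist $c$ vertices, pairwise at distance greater than $2(2s+1)$, each satisfying $\mu$ (viewed as $(2s+1)$-local). If $\sigma_{\geq k+1}$ holds then, since each $x_i$ can be within distance $2s+1$ of at most one such scattered vertex, at least one scattered $\mu$-vertex lies at distance more than $2s+1$ from all of $\bar x$, so $E(\bar x)$ holds. If $\sigma_{\geq k+1}$ fails then every $\mu$-vertex lies within distance $2(2s+1)$ of one of at most $k$ vertices forming a maximal scattered set, and a case analysis on which of these $O(k)$ ``clusters'' meet the bounded region $B^G_{O(s)}(\bar x)$ shows that in this regime $E(\bar x)$ is equivalent to a Boolean combination of an $O(s)$-local formula of $\bar x$ and the sentences $\sigma_{\geq 1},\ldots,\sigma_{\geq k}$: clusters lying far from $\bar x$ automatically supply a distant witness, while if all clusters are near $\bar x$ then all $\mu$-vertices are confined to a bounded ball around $\bar x$ and the statement becomes local. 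Combining the two regimes expresses $E(\bar x)$, hence the whole far disjunct, as a Boolean combination of basic local sentences and a local formula of $\bar x$.

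I expect the main obstacle to be exactly this final piece: running the ``scattered versus clustered'' analysis simultaneously for all $k$ free variables, and, more delicately, performing the radius bookkeeping. One must check that all radii that surface (the $O(s)$ radius from the near split, the $2(2s+1)$ radius of the scattered sentences, and the $O(s)$ radii from the cluster analysis) remain within a factor $7$ of the radius $7^{q-1}$ available before the quantifier; the generosity of the constant $7$ is precisely what absorbs the various additive and multiplicative losses, yielding the bound $7^q$. Everything else — the DNF normalization, commuting sentences past the quantifier, and the disjoint-union type computation — is routine, so I would reserve the bulk of the write-up for a clean statement of the scattered-witnesses lemma together with the attendant radius estimate.
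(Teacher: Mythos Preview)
The paper does not prove this theorem at all: Theorem~\ref{thm:Gaifman} is stated with a citation to Gaifman's original paper~\cite{gaifman1982local} and used as a black box (only its corollaries, Corollary~\ref{cor:gaifman} and Theorem~\ref{thm:gaifman_bd_range}, receive any argument). So there is no ``paper's own proof'' to compare against; your proposal supplies a proof where the paper deliberately omits one.

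On its own terms, your sketch follows the standard proof of Gaifman's theorem and is essentially sound: induction on formula structure, the near/far split of the existential witness, the Feferman--Vaught decomposition over the disjoint balls in the far case, and the scattered-versus-clustered dichotomy to eliminate the remaining dependence of the far disjunct on $\bar x$. Your self-identified difficulty is the real one: the cluster analysis when $\sigma_{\geq k+1}$ fails, and especially the radius bookkeeping showing everything fits under $7^q$, are where the work lies and where your write-up is currently only a gesture (``a case analysis \ldots\ shows''). If you actually intend to include a proof, that step needs to be written out carefully; otherwise, for the purposes of this paper, a citation suffices.
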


In our proofs we will use the following two corollaries of Gaifman's theorem.
\begin{corollary}
\label{cor:gaifman}
    For every formula $\psi(x,y)$ there exists a number $r$ and $r$-local formula $\rho(x,y)$ such that for every $G$ one of the following is true:
    \begin{itemize}
        \item $\psi(G)$ is a clique
        \item $\psi(G)$ is a edgeless
        \item $\psi(G) = \rho(G)$
        \item $\psi(G) = \lnot\rho(G)$
    \end{itemize}
\end{corollary}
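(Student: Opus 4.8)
The plan is to derive Corollary~\ref{cor:gaifman} directly from Gaifman's locality theorem (Theorem~\ref{thm:Gaifman}) applied to the formula $\psi(x,y)$. Write $q$ for the quantifier rank of $\psi$ and set $r = 7^q$. By Theorem~\ref{thm:Gaifman}, $\psi(x,y)$ is equivalent to a boolean combination of an $r$-local formula $\rho(x,y)$ and a finite collection $\tau_1,\ldots,\tau_m$ of basic local sentences with locality radius $r$. The crucial observation is that the $\tau_i$'s are \emph{sentences}: on a fixed graph $G$, each $\tau_i$ evaluates to either true or false, independently of the choice of $x,y$. Hence, for a fixed $G$, the boolean combination collapses: there is a boolean function that, once we plug in the (fixed) truth values of $\tau_1,\ldots,\tau_m$ on $G$, becomes a boolean function of the single predicate $\rho(x,y)$ alone. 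A boolean function of one input is one of: the constant $\mathbf{true}$, the constant $\mathbf{false}$, the identity, or the negation. These four cases correspond exactly to the four bullets: $\psi(G)$ is a clique, $\psi(G)$ is edgeless, $\psi(G) = \rho(G)$, or $\psi(G) = \lnot\rho(G)$.

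Concretely, I would first invoke Gaifman to fix $q$, $r=7^q$, the $r$-local $\rho(x,y)$, and the basic local sentences $\tau_1,\ldots,\tau_m$, together with the boolean combination $B$ such that $\psi(x,y) \equiv B(\rho(x,y), \tau_1, \ldots, \tau_m)$ over all graphs. Then, given an arbitrary graph $G$, let $b_i \in \{0,1\}$ be the truth value of $\tau_i$ in $G$ for $i \in [m]$. Define the one-variable boolean function $g(t) := B(t, b_1, \ldots, b_m)$. Since $\psi(x,y) \equiv B(\rho(x,y), \tau_1,\ldots,\tau_m)$, for every $u,v \in V(G)$ we have $G \models \psi(u,v)$ if and only if $g$ evaluates to $1$ on the truth value of $\rho(u,v)$ in $G$. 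Now do a four-way case analysis on $g$: if $g \equiv 1$ then $\psi(G)$ is a clique; if $g \equiv 0$ then $\psi(G)$ is edgeless; if $g$ is the identity then $E(\psi(G)) = \{uv : G \models \rho(u,v)\} = E(\rho(G))$; and if $g$ is negation then $E(\psi(G)) = \{uv : G \not\models \rho(u,v)\} = E(\lnot\rho(G))$.

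There is one mild technical wrinkle to address for cleanliness: the formula $\rho(x,y)$ produced by Gaifman need not be antireflexive and symmetric, so $\rho(G)$ as a "graph" needs a brief justification, or we should symmetrize. In the statement of the corollary $\rho(G)$ and $\lnot\rho(G)$ are used as graphs, so strictly speaking one should replace $\rho$ by its antireflexive symmetric closure $\rho'(x,y) := (\rho(x,y) \lor \rho(y,x)) \land x \neq y$, which is still $r$-local (locality is preserved by boolean combinations and by adding the inequality atom, which is $0$-local) and which defines the same graph as $\rho$ whenever $\psi$ already defined a (simple, undirected, loopless) graph; since $\psi$ is assumed to be an interpretation formula, i.e. antireflexive and symmetric, this causes no loss. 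Also the four cases of $g$ are not mutually exclusive on degenerate inputs (e.g. $g$ could be identity and $\rho(G)$ could happen to be a clique), but the corollary only claims that \emph{one} of the bullets holds, so this is harmless.

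The main obstacle, such as it is, is conceptual rather than technical: one must recognize that the burden of Gaifman's theorem here is precisely to separate the "local, pair-dependent" content ($\rho$) from the "global, pair-independent" content (the basic local sentences), and that the latter, being sentences, degenerate on any fixed $G$ to constants and therefore cannot do anything more interesting to $\psi(G)$ than flip it to its complement or trivialize it. Once this is seen, the proof is a short boolean-function-on-one-variable argument. I do not anticipate needing any further properties of basic local sentences or of $r$-locality beyond the facts already recalled in the excerpt.
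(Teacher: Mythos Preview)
Your proposal is correct and follows exactly the same approach as the paper: apply Gaifman's theorem, observe that the basic local sentences $\tau_i$ evaluate to constants on a fixed $G$, and conclude that the boolean combination collapses to one of $\top$, $\bot$, $\rho$, or $\lnot\rho$. The paper's proof is a two-sentence version of your argument; your additional remarks on symmetrizing $\rho$ are a nice touch the paper glosses over.
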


\begin{proof}
    Apply Gaifman's theorem to $\psi(x,y)$ to obtain a boolean combination of $r$-local formula $\rho(x,y)$ and basic local sentences $\tau_1,\ldots, \tau_m$. Let $G$ be arbitrary. After evaluating all $\tau_i$ on $G$, the boolean combination of formulas reduces to $\top$, $\bot$, $\rho(x,y)$ or $\lnot\rho(x,y)$ and the result follows.
\end{proof}


Let $G$ and $H$ be graphs. We say that $H$ is a \emph{$k$-flip} of $G$ if one can obtain $H$ from $G$ by partitioning $V(G)$ into at most $k$ parts $V_1,\ldots,V_k$ and complementing the adjacency between some pairs of parts (not necessarily distinct).

\begin{theorem}[\cite{trans_bd_range}, also implicit in~\cite{gajarsky2022differential}]
\label{thm:gaifman_bd_range}
    Let $\C$ be a class of graphs and $\T$ a transduction. Then there exists a transduction $\RT$ of bounded range such that every graph in $\T(\C)$ is a $k$-flip of a graph from $\RT(\C)$.
\end{theorem}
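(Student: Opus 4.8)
The plan is to use Gaifman's locality theorem (in the packaged form of Corollary~\ref{cor:gaifman}) to replace the interpretation formula $\psi$ of $\T$ by an $r$-local formula $\rho$, and then to show that an $r$-local formula can be simulated, up to a bounded flip, by a formula of range $O(r)$ once the source graph is enriched with boundedly many colours. Concretely, fix the number $m$ of colours and the interpretation formula $\psi(x,y)$ of $\T$, and apply Corollary~\ref{cor:gaifman} to obtain a radius $r$ and an $r$-local formula $\rho(x,y)$, which we may assume symmetric and antireflexive, such that for every colored graph $G^+$ the graph $\psi(G^+)$ is a clique, is edgeless, equals $\rho(G^+)$, or equals $\lnot\rho(G^+)$. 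The first two cases are degenerate and will be absorbed using a single extra ``switch'' colour; the heart of the matter is the case $\psi(G^+)=\rho(G^+)$.

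For that case, the key observation is that the behaviour of $\rho$ on \emph{far} pairs of vertices is a blow-up of a bounded pattern. Let $q$ be the quantifier rank of $\rho$; over the signature of colored graphs with $m$ extra colours there are only finitely many, say $s$, quantifier-rank-$q$ types of pointed radius-$r$ balls, and for the $i$-th such type we let $\theta_i(x)$ be the $r$-local formula asserting that $(B^G_r(x),x)$ realises it. If $\dist_{G^+}(u,v)>2r+1$ then $B^{G^+}_r(u,v)$ is the disjoint union of $(B^{G^+}_r(u),u)$ and $(B^{G^+}_r(v),v)$ with no edges in between, so by the Feferman--Vaught composition theorem together with the $r$-locality of $\rho$, whether $G^+\models\rho(u,v)$ depends only on the types of these two pointed balls. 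Hence there is a fixed symmetric relation $R\subseteq [s]\times[s]$ with $G^+\models\rho(u,v)\iff(\mathrm{type}(u),\mathrm{type}(v))\in R$ for all $G^+$ and all far pairs $u,v$. Writing $C_i$ for the set of vertices of type $i$, this says precisely that on far pairs $\rho(G^+)$ agrees with the flip of the edgeless graph along the partition $(C_i)_{i\in[s]}$ with flip-set $R$.

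Now I would take $\RT$ to be the transduction that adds the $m$ colours of $\T$ plus one fresh colour $c$, applies the interpretation formula
\[
\rho'(x,y)\ :=\ \dist(x,y)\le 2r+1\ \wedge\ x\ne y\ \wedge\ \lnot\exists z\,c(z)\ \wedge\ \Big(\rho(x,y)\ \oplus\ \bigvee_{(i,j)\in R}\big(\theta_i(x)\wedge\theta_j(y)\big)\Big),
\]
where $\oplus$ denotes exclusive-or (which is first-order expressible), and then passes to an induced subgraph as every transduction may. The conjunct $\dist(x,y)\le 2r+1$ makes $\rho'$ of range $2r+1$, and adding the remaining conjuncts only preserves this bound. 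A three-way split — far pairs, close pairs inside a flipped pair of classes, close pairs inside a non-flipped pair of classes — shows that when $c$ is left empty the graph $\rho'(G^+)$ is exactly the flip of $\rho(G^+)$ along $(C_i)_{i\in[s]}$ with flip-set $R$; since flips commute with taking induced subgraphs, $\rho(G^+)[W]$ is an $s$-flip of $\rho'(G^+)[W]\in\RT(G)$. The case $\psi(G^+)=\lnot\rho(G^+)$ is identical after replacing $R$ by its complement, and the cases where $\psi(G^+)$ is edgeless or a clique are handled by colouring \emph{all} vertices with $c$, which makes $\rho'(G^+)$ edgeless, while a clique on $W$ is a $1$-flip of the edgeless graph on $W$. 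Taking $k:=s$ and the uniform range $2r+1$ finishes the argument.

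The step I expect to require the most care is getting the flip and the bounded-range formula to mesh. A flip along the type-partition necessarily complements \emph{all} pairs between two flipped classes, including the close ones, so one cannot simply ``fix the far pairs by a flip and keep the close pairs as produced by $\rho'$''; the $\oplus$ in $\rho'$ is exactly what compensates for this unwanted complementation on close pairs, and checking that it does so correctly is the content of the three-way case split above. Everything else — the appeal to Gaifman, the Feferman--Vaught argument for far pairs, the switch-colour trick for the degenerate clique/edgeless cases, and commuting flips with induced subgraphs — is routine bookkeeping.
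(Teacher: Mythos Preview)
The paper does not give its own proof of Theorem~\ref{thm:gaifman_bd_range}; the result is quoted from \cite{trans_bd_range} (and noted as implicit in \cite{gajarsky2022differential}) and used as a black box, so there is no in-paper argument to compare your proposal against.

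That said, your argument is essentially the standard proof of this statement and is correct. Reducing $\psi$ to an $r$-local $\rho$ via Gaifman, using Feferman--Vaught on disjoint balls to see that on pairs at distance greater than $2r+1$ the truth of $\rho$ depends only on the quantifier-rank-$q$ types of the two pointed $r$-balls, extracting the finite symmetric pattern $R$ on the type partition, and then XOR-ing this pattern into the formula under a distance-$\le 2r+1$ guard is exactly the mechanism behind the cited result. Your three-way case split verifies that the $R$-flip of $\rho'(G^+)$ recovers $\rho(G^+)$, and the observation that $\lnot\rho$ is handled by the complementary flip-set (same partition, hence still an $s$-flip of the \emph{same} $\rho'(G^+)$) and that the clique/edgeless cases are absorbed by the switch colour $c$ is the usual bookkeeping. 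One cosmetic point: the symmetry/antireflexivity of $\rho$ that you ``assume'' is justified by replacing $\rho(x,y)$ with $\rho(x,y)\wedge\rho(y,x)\wedge x\neq y$, which is still $r$-local and, on any $G^+$ where $\psi(G^+)=\pm\rho(G^+)$, defines the same relation because $\psi$ itself is symmetric and antireflexive.
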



\subsection{Graph theory}

All graphs in this paper are finite, simple, undirected, and unless specified otherwise, they do not contain loops. 
We use standard graph-theoretic terminology and notation. 

While we will rely on the notions of treewidth and cliquewidth, we will never need to explicitly work with tree-decompositions and cliquewidth decompositions, and so we omit their definitions. We will only need the properties stated in Theorems~\ref{thm:cw_sparse_tw} and~\ref{thm:tw_trans_cw}.





\begin{theorem}[\cite{cw_sparse_tw}]
\label{thm:cw_sparse_tw}
    There exists a function $g: \N \times \N \to \N$ with the following property. If $\C$ is a class of graphs of  clique-width at most $c$ such that no graph from $\C$ contains $K_{s,s}$ as a subgraph, then $\C$ has treewidth at most $g(c,s)$.
\end{theorem}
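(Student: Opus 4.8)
The plan is to reduce the statement to producing, in each relevant graph, a single small balanced separator, and then to invoke the standard equivalence between balanced separators and treewidth. Concretely: since having clique-width at most $c$ and excluding $K_{s,s}$ as a subgraph are both inherited by induced subgraphs, it suffices to show that there is a bound $w=w(c,s)$ such that every graph $G$ with clique-width at most $c$ and no $K_{s,s}$ subgraph admits, for every weighting of its vertices, a $\tfrac{2}{3}$-balanced separator of size at most $w$. Granting this, the theorem follows from the well-known fact that a graph all of whose induced subgraphs have weighted $\tfrac{2}{3}$-balanced separators of size at most $w$ has treewidth $O(w)$; one can then set $g(c,s)$ to the resulting linear function of $cs$.

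To produce the separator I would work with a fixed $c$-expression for $G$ and its parse tree $T$, which may be assumed binary with leaves in bijection with $V(G)$. For a node $t$ of $T$, let $H_t$ be the graph built at $t$, with its vertices carrying labels in $\{1,\dots,c\}$; then $|V(H_t)|$ equals the number of leaves below $t$. The crucial structural observation — and the step I expect to be the main obstacle to get exactly right — is that any two vertices of $H_t$ carrying the same label at $t$ have identical neighbourhoods in $V(G)\setminus V(H_t)$: after node $t$ the only operations touching them are relabelings and bipartite edge-additions, both of which act uniformly on label classes, so their labels evolve in lockstep and they are ultimately joined to exactly the same vertices outside $V(H_t)$. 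Consequently the vertices of $H_t$ split into at most $c$ classes $A_1,\dots,A_p$ according to their neighbourhood outside $V(H_t)$; write $N_i\subseteq V(G)\setminus V(H_t)$ for the common outside-neighbourhood of $A_i$.

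Next I would pick, by the usual descent argument on $T$ (repeatedly moving to the heavier child), a node $t$ for which both $V(H_t)$ and $V(G)\setminus V(H_t)$ carry at most $\tfrac{2}{3}$ of the total weight. Because $G$ has no $K_{s,s}$ subgraph while $A_i$ is completely joined to $N_i$, each class satisfies $|A_i|<s$ or $|N_i|<s$; letting $Z$ be the union over $i$ of whichever of $A_i$, $N_i$ is small, we get $|Z|<cs$, and deleting $Z$ leaves no edge between $V(H_t)\setminus Z$ and $(V(G)\setminus V(H_t))\setminus Z$, so every side of $G-Z$ carries at most $\tfrac{2}{3}$ of the total weight. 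This yields the claimed separator with $w=cs$, and hence $g(c,s)=O(cs)$. Apart from the labeling observation, the remaining ingredients are routine: the descent argument that locates a balanced edge of $T$, and the standard balanced-separator-to-treewidth implication, for which I would simply cite the usual source.
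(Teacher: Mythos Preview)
The paper does not give its own proof of this statement: Theorem~\ref{thm:cw_sparse_tw} is quoted from~\cite{cw_sparse_tw} and used as a black box, so there is no in-paper argument to compare against.

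That said, your sketch is correct and is essentially the standard proof of this fact. The key structural point---that all vertices carrying the same label at a node $t$ of the parse tree have identical neighbourhoods in $V(G)\setminus V(H_t)$---is exactly right, and together with $K_{s,s}$-freeness it yields for each of the at most $c$ label classes $A_i$ (with common outside-neighbourhood $N_i$) that $\min(|A_i|,|N_i|)<s$; taking the small side for each $i$ gives a cut $Z$ of size at most $c(s-1)$ separating $V(H_t)$ from its complement. The balanced descent on the parse tree (passing through unary nodes and picking the heavier child at disjoint-union nodes until the subtree weight drops to at most $\tfrac{2}{3}$) then makes $Z$ a $\tfrac{2}{3}$-balanced separator, and since both hypotheses are hereditary the standard separator-to-treewidth lemma finishes. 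One cosmetic remark: the $N_i$ need not be disjoint, but your bound $|Z|<cs$ still holds since you sum at most $c$ sets each of size $<s$.
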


\begin{theorem}[\cite{cw_interp}]
\label{thm:tw_trans_cw}
    For every formula $\psi(x,y)$ there exists a function $g_{\psi}: \N \to \N$ such that for every graph $G$ we have $cw(\psi(G)) \le g_\psi(tw(G))$.
\end{theorem}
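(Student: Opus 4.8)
The plan is to reduce the statement to two facts about clique-width: first, that bounded treewidth implies bounded clique-width in a quantitative way, and second, that graphs first-order (indeed MSO) definable inside trees have bounded clique-width. Note that Theorem~\ref{thm:cw_sparse_tw} alone is of no help here, since $\psi(G)$ may be dense; we need a genuinely new argument.

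First I would fix a tree-like presentation of $G$. By the Courcelle--Olariu bound $cw(G)\le 3\cdot 2^{tw(G)-1}$, setting $c:=3\cdot 2^{tw(G)-1}$ there is a $c$-expression for $G$, which I view as a rooted tree $t_G$ of degree at most three whose nodes carry labels from a finite alphabet $\Sigma_c$ of size $O(c^2)$: a leaf symbol $\mathrm{init}_i$ for each $i\in[c]$, unary symbols $\rho_{i\to j}$ and $\eta_{i,j}$ for pairs $i,j$, and a binary symbol $\oplus$, with the leaves of $t_G$ being exactly the vertices of $G$. Next I observe that there is a fixed MSO formula $\epsilon(x,y)$ over $\Sigma_c$-labeled trees, of quantifier rank bounded by some function of $c$, such that for every $c$-expression tree $t$ and leaves $a,b$, $t\models\epsilon(a,b)$ iff $a,b$ are adjacent in the graph evaluated by $t$ (essentially: on the path from $\mathrm{lca}(a,b)$ to the root there is an $\eta$-node acting between the current labels of $a$ and $b$, with no intervening relabeling spoiling this). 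Substituting $\epsilon$ for $E$ in $\psi$ and restricting the universe to leaves gives a symmetric antireflexive MSO formula $\widehat\psi(x,y)$ on $\Sigma_c$-labeled trees, of quantifier rank bounded by some $q':=h(q,c)$ where $q$ is the quantifier rank of $\psi$, with $\psi(G)=\widehat\psi(t_G)$ (vertex set the leaves of $t_G$). So it suffices to prove the following core lemma: there is a function $N$ so that for every MSO formula $\chi(x,y)$ of quantifier rank at most $q'$ over $\Sigma$-labeled trees of degree at most three, and every such tree $t$, we have $cw(\chi(t))\le N(q',|\Sigma|)$.

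I would prove the core lemma with tree automata. Let $\mathcal A$ be a deterministic bottom-up automaton, of state set $Q$ of size bounded in terms of $q'$ and $|\Sigma|$, that on a $\Sigma$-labeled tree with two marked leaves accepts iff $\chi$ holds of the marked leaves; write $Q^{(1)},Q^{(2)}$ for its one- and two-marked variants. First run a top-down pass to compute, for every node $u$, a \emph{context datum} $c_u$ from a finite set: the information about the part of $t$ not below $u$ (obtained from the root downward using unmarked subtree states of siblings) sufficient to decide acceptance once the state at $u$ is known. Then build a clique-width expression for $\chi(t)$ by a bottom-up traversal maintaining the invariant that after processing the subtree $t_u$, every vertex $a$ of $\chi(t)$ lying below $u$ carries as its label the one-marked subtree state $q^a_u\in Q^{(1)}$ of $t_u$ with $a$ marked (up to a constant factor of extra labels to discard leaves outside the universe). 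At a leaf: create the vertex with the appropriate one-marked leaf state. At a unary node: relabel. At an $\oplus$-node $u$ with children $u_1,u_2$: for every pair $(\sigma,\tau)\in Q^{(1)}\times Q^{(1)}$ such that combining $\sigma$ (from the $u_1$ side) and $\tau$ (from the $u_2$ side) into the corresponding two-marked state at $u$ and then applying $c_u$ yields acceptance, perform the operation $\eta_{\sigma,\tau}$; this is exactly the set of edges of $\chi(t)$ whose endpoints have $\mathrm{lca}$ equal to $u$, since for such a pair the truth of $\chi$ depends only on the two one-marked subtree states at $u_1,u_2$ and on $c_u$, and $c_u$ is fixed at this node, so the condition depends only on the labels of the two endpoints. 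Then relabel so every vertex below $u$ carries its one-marked subtree state of $t_u$, which is a function of its previous label and the unmarked subtree state of its sibling, hence respects label classes; this restores the invariant. No edge is added twice, as each pair has a unique $\mathrm{lca}$. The number of labels used is $|Q^{(1)}|$ times a constant, giving the bound $N(q',|\Sigma|)$.

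Combining, $cw(\psi(G))=cw(\widehat\psi(t_G))\le N\bigl(h(q,3\cdot 2^{tw(G)-1}),\,O((3\cdot 2^{tw(G)-1})^2)\bigr)$, which is of the form $g_\psi(tw(G))$ for a function $g_\psi$ depending only on $\psi$. The main obstacle is the core lemma, and within it the fact that adjacency in $\chi(t)$ can depend on global information about $t$, not merely on the subtree below the $\mathrm{lca}$ of the two endpoints. This is precisely what the preliminary top-down pass computing the context data $c_u$ is designed to absorb: it makes every edge-addition decision depend only on bounded local data (the two endpoints' subtree states and the fixed $c_u$), and such decisions depend only on label classes, hence are realizable by the clique-width operations $\eta_{\sigma,\tau}$ and $\rho_{\sigma\to\sigma'}$.
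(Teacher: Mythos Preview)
The paper does not actually prove this theorem: it is quoted as a known result from~\cite{cw_interp} and used as a black box. So there is no ``paper's own proof'' to compare against.

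Your proposal is a correct self-contained argument for the cited fact, and it follows what is essentially the standard route: pass from bounded treewidth to bounded clique-width via the Courcelle--Olariu bound, represent $G$ by its $c$-expression tree $t_G$, translate $\psi$ into an MSO formula $\widehat\psi$ on $\Sigma_c$-labeled trees by substituting the MSO-definable edge relation $\epsilon$ and relativizing to leaves, and then invoke the core lemma that MSO-definable binary relations on the leaves of a labeled binary tree yield graphs of bounded clique-width. Your automaton-based proof of the core lemma is sound; the key point you handle correctly is that acceptance for a pair $(a,b)$ depends not only on the subtree below $\mathrm{lca}(a,b)$ but also on the context above it, and your top-down computation of the context data $c_u$ absorbs exactly this dependence, so that at each join node the edge-creation decisions factor through the current label classes and can be realized by $\eta$-operations.

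Two cosmetic remarks. First, in this setup every leaf of $t_G$ is already a vertex of $G$, so no ``discard leaves outside the universe'' step is needed. Second, the quantifier rank of $\epsilon$ need not grow with $c$ (the labels are unary predicates, not encoded in binary), so the dependence on $c$ enters through the alphabet size $|\Sigma_c|$ rather than through $q'$; your final bound $N(q',|\Sigma_c|)$ already accounts for this either way.
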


Let $G$ and $H$ be graphs. The \emph{strong product} of $G$ and $H$, denoted by $G \boxtimes H$, is the graph on vertex set $V(G)\times V(H)$ in which there is an edge between distinct vertices $(u,x)$ and $(v,y)$ if
\begin{itemize}
    \item $u=v$ and $xy \in E(H)$
    \item $x=y$ and $uv \in E(G)$
    \item $uv \in E(G)$ and $xy \in E(H)$.
\end{itemize}

In their celebrated result~\cite{queue}, Dujmovic, Joret, Micek, Morin, Ueckerdt and Wood proved the following result, known as the \emph{produc structure theorem for planar graphs}.
\begin{theorem}[\cite{queue}]
    Every planar $G$ graph is a subgraph of $H \boxtimes P$, where $H$ is a graph of treewidth at most $8$ and $P$ is a path.
\end{theorem}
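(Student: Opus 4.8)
The plan is to follow the original argument of Dujmović, Joret, Micek, Morin, Ueckerdt and Wood, whose backbone is a recursive analysis of a plane triangulation equipped with a breadth-first layering. \textbf{Step 1 (reduction to triangulations).} Every planar graph is a spanning subgraph of some planar triangulation (keep adding edges until every face is a triangle), and being a subgraph of $H \boxtimes P$ passes to subgraphs. So I may assume $G$ is a plane triangulation; fix such a drawing and a vertex $r$ on the outer face, and let $V_0 = \{r\}, V_1, \ldots, V_t$ be the layers of a BFS from $r$, so that every edge of $G$ joins two vertices in the same layer or in two consecutive layers.

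\textbf{Step 2 (from vertical-path partitions to products).} Call a path $Q$ in $G$ \emph{vertical} if the layer indices of its vertices, read along $Q$, form a block of consecutive integers; in particular a vertical path meets each layer at most once. It suffices to find a partition $\mathcal{Q}$ of $V(G)$ into vertical paths such that the graph $H := G / \mathcal{Q}$ obtained by contracting each part to a single vertex satisfies $\mathrm{tw}(H) \le 8$. Indeed, let $P = p_0 p_1 \cdots p_t$ be a path, and let $\eta$ send a vertex $v$ that lies in the part $Q \in \mathcal{Q}$ and in layer $V_j$ to $(q_Q, p_j) \in V(H) \times V(P)$, where $q_Q$ is the vertex of $H$ to which $Q$ was contracted. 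This $\eta$ is injective because each part meets each layer at most once. For an edge $uv$ of $G$ with $u \in V_j$, $v \in V_{j'}$ we have $|j - j'| \le 1$ by the BFS layering, so the $P$-coordinates of $\eta(u)$ and $\eta(v)$ are equal or adjacent; and the $H$-coordinates are equal if $u,v$ lie in the same part, and adjacent otherwise, since an edge of $G$ between two distinct parts becomes an edge of the contraction. Hence $\eta$ is a subgraph embedding of $G$ into $H \boxtimes P$.

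\textbf{Step 3 (the core: bounding the treewidth of the contraction).} It remains to choose the vertical paths so that the quotient has treewidth at most $8$. This is the technically hard part, and the only place where planarity is genuinely used: contraction alone merely preserves planarity, which is far too weak. The choice of paths must be guided by the embedding. One roots the BFS tree at $r$ by giving each non-root vertex $v$ a canonical parent in the previous layer, selected via the cyclic order around $v$ in the drawing (the neighbours of $v$ in the previous layer occur consecutively around $v$ — a property of BFS layerings in plane triangulations — so "first" among them is well defined), and then builds the vertical paths greedily, always continuing a path from a vertex to its canonically chosen child and starting a fresh path whenever forced. One then produces a bounded-width tree-decomposition of the quotient $H$ by sweeping through the layers $V_0, V_1, \ldots$ and arguing that at each stage the boundary of the already-processed region, once the vertical paths are contracted, is carried by a bounded number of quotient vertices, because in a plane triangulation these layer-boundary walks remain short after contraction. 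Pushing the bookkeeping down to the explicit constant $8$ requires a careful case analysis; any fixed absolute constant in place of $8$ would already be enough for this paper, since only boundedness of the clique-width of bounded pieces is used downstream.

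\textbf{Main obstacle.} The genuine difficulty is Step 3 — producing the bounded-width tree-decomposition of the quotient $H$ and, in particular, getting the clean constant $8$. This needs the full strength of the embedding-guided construction of vertical paths together with a delicate induction on how the layer boundaries of a plane triangulation behave under contraction. By comparison, the triangulation reduction and the product-embedding verification in Steps 1 and 2 are routine.
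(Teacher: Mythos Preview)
The paper does not prove this theorem at all: it is quoted verbatim from \cite{queue} as an external black box (``In their celebrated result~\cite{queue}, Dujmovi\'c, Joret, Micek, Morin, Ueckerdt and Wood proved the following result\ldots''), and only the statement is used downstream, via Theorem~\ref{thm:product_general} and Lemma~\ref{lem:tw_product}. So there is no ``paper's own proof'' to compare your attempt against; for the purposes of this paper a bare citation suffices.

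That said, a brief comment on your sketch as an outline of the actual argument in \cite{queue}. Steps~1 and~2 are correct and standard: reduce to plane triangulations, take a BFS layering, and observe that a partition into vertical paths yields a subgraph embedding into $H\boxtimes P$ with $H$ the quotient. Step~3, however, does not describe what \cite{queue} actually does. The real engine is not a greedy top-down extension of BFS-tree paths with a ``layer-boundary sweep''; it is a \emph{tripod decomposition}. One proves, by induction on the size of a near-triangulation whose outer cycle is covered by at most three vertical paths, that the interior admits a vertical-path partition whose quotient has bounded treewidth: using Sperner's lemma (or an equivalent tricoloured-triangle argument) one finds an internal face whose three vertices send three new vertical paths up to the three boundary paths, splitting the region into smaller subregions each again bounded by at most three vertical paths. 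The tree-decomposition of the quotient has one bag per region, containing the (at most six) boundary paths plus the three new tripod legs, which is where the constant comes from. Your ``canonical BFS parent / boundary-walk contraction'' picture is not how the bound is obtained, and as stated it is not clear it would yield any uniform bound. So while your framing of Steps~1--2 is fine, Step~3 misidentifies the key mechanism; if you ever need to reproduce the proof rather than cite it, the tripod/Sperner recursion is the idea to supply.
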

More generally, they have proved the following\footnote{We note that the result of~\cite{queue} actually states that every graph of Euler genus $G$ is subgraph of $(K_{2g} + H) \boxtimes P$, where $H$ is of treewidth $8$. Our version of the theorem follows easily from this, since  the treewidth of $K_{2g} + H$ is at most $2g+8$.}.
\begin{theorem}[\cite{queue}]
\label{thm:product_general}
    Let $\C$ be a class of graphs of bounded Euler genus. Then there exists $t\in \N$ such that every $G \in \C$ is a subgraph of $H \boxtimes P$, where $H$ is a graph of treewidth at most $t$ and $P$ is a path.
\end{theorem}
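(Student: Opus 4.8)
The plan is to obtain the statement from the planar product structure theorem stated just above, together with its extension to bounded Euler genus in~\cite{queue}. Since $\C$ has bounded Euler genus, fix $g_0\in\N$ so that every $G\in\C$ has Euler genus at most $g_0$; it then suffices to show, for each fixed $g$, that there is a constant $t(g)$ such that every graph of Euler genus at most $g$ is a subgraph of $H\boxtimes P$ for some $H$ with $tw(H)\le t(g)$ and some path $P$, and to put $t:=t(g_0)$. The quickest route is the one indicated in the footnote: \cite{queue} proves that every graph of Euler genus $g$ is a subgraph of $(K_{2g}+H')\boxtimes P$ for some $H'$ with $tw(H')\le 8$, and adjoining $2g$ universal vertices to a graph increases its treewidth by at most $2g$ (add the new vertices to every bag of a tree decomposition), so $H:=K_{2g}+H'$ has treewidth at most $2g+8=:t(g)$.

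For completeness I would also sketch how the genus-$g$ form follows from the planar one, by induction on the Euler genus $g$. The base case $g=0$ is exactly the planar product structure theorem (with $tw(H)\le 8$, hence also $tw(K_0+H)\le 8$). For the inductive step, fix a graph $G$ of Euler genus $g\ge 1$; adding edges and vertices only makes the conclusion harder to obtain, so we may assume $G$ triangulates a surface $\Sigma$ of Euler genus $g$ and in particular has a noncontractible cycle. Let $C$ be a shortest noncontractible cycle of $G$ and cut $\Sigma$ along $C$: this yields a surface $\Sigma'$ of Euler genus at most $g-1$ carrying a (possibly disconnected) graph $G'$ obtained from $G$ by replacing each vertex of $C$ with at most two copies and distributing its incident edges among the copies. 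There is thus a surjection $\pi\colon V(G')\to V(G)$ with $|\pi^{-1}(v)|\le 2$, injective outside the copies of $V(C)$, and $G$ is the quotient of $G'$ that identifies the vertices in each fibre $\pi^{-1}(v)$ with $v\in V(C)$. By the induction hypothesis $G'$ is a subgraph of $H'\boxtimes P'$ with $tw(H')\le t(g-1)$.

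Transferring the product structure from $G'$ to its quotient $G$ is the main obstacle, since identifying vertices can destroy product structure. The leverage is that a shortest noncontractible cycle is \emph{isometric} in $G$: for all $u,w\in V(C)$, the distance $\dist_G(u,w)$ equals the distance between $u$ and $w$ along $C$ (if not, a hypothetical shortcut glued to an arc of $C$ would yield a noncontractible closed walk shorter than $C$, a contradiction via $\mathbb{Z}_2$-homology). Consequently, if the breadth-first layering underlying the path factor $P'$ is rooted simultaneously at the copies of a chosen base vertex of $C$, then both copies of every vertex of $C$ lie in the \emph{same} layer, so in $P'$ the pairs of vertices to be identified never straddle more than one part. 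One can then fold the vertices on the copies of $C$ into a bounded-size apex set per handle and realign the path factor, at the cost of increasing the treewidth of the bounded-treewidth factor by a constant; over all $g$ handles this apex set is precisely the $K_{2g}$ of the quick route, and the recursion closes with $t(g)=t(g-1)+O(1)$, consistent with the bound $2g+8$. I expect the truly delicate points to be verifying that the breadth-first layering survives cutting along $C$ in a controlled way, and specifying exactly how the copies of $C$ and the product structure of $G'$ interact with the identification map $G'\to G$; this is the actual content of~\cite{queue}, which I would otherwise just cite.
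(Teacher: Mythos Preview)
Your proposal is correct and matches the paper's treatment exactly: the paper does not prove this theorem but simply cites~\cite{queue}, with the footnote observation that the stated form follows from the $(K_{2g}+H')\boxtimes P$ version because $tw(K_{2g}+H')\le 2g+8$; your first paragraph reproduces precisely this. The inductive sketch you append goes well beyond what the paper does (and beyond what is needed here), but it is a faithful outline of the argument in~\cite{queue} and you correctly flag the delicate step---controlling how the BFS layering and the product structure of the cut surface interact with the identification along the shortest noncontractible cycle---as the real content of that reference.
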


We will work with a subclass of all 3D-grids in which all sides have the same length. We call such graphs \emph{cubes}.
\begin{definition}
 The \emph{cube with side length $N$} is the graph $Q_N$ with vertex set $[N]^3$ such that for every $i,j,k, i',j',k' \in [N]$ we have $(i,j,k)(i',j',k') \in E(Q_N)$ if and only if $|i-i'| + |j-j'| + |k-k'| = 1$.

 We define the class $\Q = \{Q_N~|~N \in \N\}$.
\end{definition}

We define three projection functions $\pi_1, \pi_2, \pi_r$ from $V(Q_N)$ to $[N]$ in the natural way by setting $\pi_1((i,j,k)) = i$, $\pi_2((i,j,k)) = j$ and $\pi_3((i,j,k)) =k$.

Sometimes we will consider induced subgraphs of cubes that are themselves isomorphic to a cube. For example, consider $Q_{2N}$ for some $N$ and set $S = [N,2N]\times[N,2N]\times[N,2N]$. Then $Q_{2N}[S]$ is isomorphic to the cube $Q_N$, but is not equal to it because it has different vertex set. We will usually ignore this subtlety and treat such induced subgraphs as cubes (thus, we also consider graphs isomorphic to cubes to be cubes).

\section{Slice decompositions}
\label{sec:decompositions}

In this section we define \emph{slice} decompositions and prove that

\begin{definition}
    Let $f: \N \to \N$ be a function and $G$ a graph. A \emph{slice decomposition of $G$ guarded by $f$}, is a partition $V_1, \ldots, V_\ell$ of $V(G)$ such that:
    \begin{enumerate}[(i)]
        \item If $uv \in E(G)$, then $u \in V_i$ and $v \in V_j$ such that $|i-j|\le 1$.
        \item For every $k \in \N$ and every $i \le \ell-k + 1$ we have that clique-width of $G[V_i \cup \ldots \cup V_{i+k-1}]$ is at most $f(k)$.
    \end{enumerate}   
\end{definition}

Let $\D$ be a class of graphs. We say that \emph{$\D$ has slice decompositions} of there exists a function $f$ such that every $G \in \D$ ha a slice decomposition guarded by $f$.

Our main result in this section is the following structural result for graph classes transducible from graph classes of bounded genus.
\begin{theorem}
    Let $\C$ be a class of graphs of bounded genus and let $\T$ be a transduction. Then there exists a class $\D$ of graphs that has slice decompositions and number $k \in N$ such that every graph in $\T(\C)$ is a $k$-flip of a graph from $\D$.
\end{theorem}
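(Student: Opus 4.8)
By Theorem~\ref{thm:gaifman_bd_range}, it suffices to handle a transduction $\RT$ of bounded range, say of range $d$: if we can show that $\RT(\C)$ is a $k'$-flip-perturbation of a class $\D$ with slice decompositions, then $\T(\C)$ is a $k$-flip-perturbation of $\D$ for $k = k' \cdot (\text{number of parts of the outer flip})$, since a composition of flips is again a flip. So fix a class $\C$ of bounded Euler genus and a bounded-range transduction $\RT$ with range $d$, defined by $m$ colors and an $r$-local interpretation formula (we may further reduce to the $r$-local case using Corollary~\ref{cor:gaifman}: the clique and edgeless outcomes are trivial, and the negation of an $r$-local formula is handled by one additional global flip). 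Let $G \in \C$ be arbitrary, with an arbitrary $m$-coloring $G^+$, and let $H$ be an induced subgraph of $\rho(G^+)$ for the relevant $r$-local formula $\rho$.

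First I would invoke the product structure theorem (Theorem~\ref{thm:product_general}) to write $G$ as a subgraph of $H_0 \boxtimes P$, where $H_0$ has treewidth at most $t$ and $P = p_1 p_2 \cdots p_n$ is a path; identify $V(G)$ with a subset of $V(H_0) \times [n]$, and for each vertex $v$ let $\ell(v) \in [n]$ be its path-coordinate. This layering is the backbone of the slice decomposition: note that an edge of $G$ joins vertices whose path-coordinates differ by at most $1$, so distances in $G$ are at least as large as the difference of path-coordinates, hence vertices at path-distance more than $2r$ along $P$ cannot interact in the $r$-local formula $\rho$. The plan is to group consecutive layers into blocks of width roughly $\Theta(r)$ — say block $B_a$ consists of all $v$ with $\ell(v) \in [(a-1)\cdot(2r+1)+1,\ a\cdot(2r+1)]$ — and take $\D$'s slice decomposition to be these blocks $B_1, B_2, \ldots$ Condition~(i) of a slice decomposition would fail only because an $r$-local formula can create edges spanning up to $2r+1$ consecutive layers, i.e.\ up to two consecutive blocks, so after relabeling (or after subdividing into blocks and re-pairing) property~(i) holds with the blocks as the parts.

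The heart of the argument is verifying property~(ii): that any $k$ consecutive blocks induce a subgraph of $H$ of clique-width bounded by a function of $k$. Consider the union $U$ of $k$ consecutive blocks, which spans an interval $I$ of at most $k(2r+1)$ consecutive layers of $P$. Let $G[U^{+}]$ denote $G^+$ restricted to those layers. Because $H$ is obtained from $G^+$ by the $r$-local formula $\rho$, and because two vertices of $U$ at path-distance more than $2r$ in $P$ have disjoint $r$-balls that do not see each other, the adjacency of $H$ restricted to $U$ is \emph{almost} determined by $G^+$ restricted to a slightly fattened interval $I'$ (widen $I$ by $r$ on each side); the caveat is that the $r$-ball of a vertex in $U$ may reach layers outside $I'$. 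To control this I would use the standard trick: the part of the $r$-ball of $v$ lying outside $I'$ is governed only by the $r$-type of $v$ with respect to a bounded-size boundary, and there are only boundedly many such types (depending on $r$, $m$, and the genus), so encoding these types as a bounded number of extra colors on $G^+[I']$ makes $H[U]$ an interpretation, via an $r$-local formula, of the colored graph $G^+[I']$. Now $G[I']$ is a subgraph of $H_0 \boxtimes P[I']$ where $P[I']$ is a path on $\le k(2r+1)+2r$ vertices, and $H_0$ has treewidth $\le t$, so $H_0 \boxtimes P[I']$ has treewidth at most $(t+1)(k(2r+1)+2r) - 1$, a function of $k$ only; hence $G[I']$ (and its colored version) has bounded treewidth, and by Theorem~\ref{thm:tw_trans_cw} the interpreted graph $H[U]$ has clique-width bounded by a function of $k$. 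This yields the function $f$, and taking $\D$ to be the class of all such $H$ (before removing vertices — induced subgraphs only improve matters, as clique-width and property~(i) are inherited by induced subgraphs) completes the construction, with the final $k$-flip coming from the reductions in the first paragraph.

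\textbf{Main obstacle.} The delicate point — and the step I expect to require the most care — is the ``boundary type'' argument: rigorously showing that $H[U]$ is an $r$-local interpretation of a \emph{bounded-treewidth} colored graph, i.e.\ that the interaction of the $r$-neighborhoods of $U$ with the rest of $G$ can be faithfully summarized by a bounded number of colors that themselves can be placed consistently on $G^+[I']$. One must argue that the number of relevant $r$-types is bounded (this uses that $G$ has bounded genus, hence locally bounded treewidth, so $r$-balls have bounded-treewidth and there are finitely many types of any given radius), and that these colors can be read off within the fattened interval without circular dependence on the rest of the graph; phrasing this cleanly, rather than the treewidth bookkeeping or the clique-width/flip manipulations, is where the real work lies.
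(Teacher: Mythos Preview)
Your overall architecture is the same as the paper's: reduce to a bounded-range transduction via Theorem~\ref{thm:gaifman_bd_range}, invoke the product structure theorem to get a path-layering, group consecutive layers into blocks, and bound the clique-width of a union of $k$ blocks by fattening, applying locality, and then Theorem~\ref{thm:tw_trans_cw}. That is exactly the proof of Lemma~\ref{lem:f-decompositions}.

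However, the step you flag as the ``main obstacle'' is not an obstacle at all, and the boundary-type machinery you sketch is unnecessary. You already observed that an edge of $G$ changes the path-coordinate by at most $1$, so a path of length $r$ in $G$ changes it by at most $r$. Hence for every $v\in U$ (path-coordinate in $I$) the entire ball $N_r^G[v]$ lies in $I'$ (which is $I$ widened by $r$). There is no ``part of the $r$-ball lying outside $I'$'' to summarize with types. The paper simply records this containment and then uses $r$-locality directly: for $u,v\in U$ one has $G\models\rho(u,v)$ iff $G[N_r^G[u]\cup N_r^G[v]]\models\rho(u,v)$ iff $G[I']\models\rho(u,v)$, so $\psi(G)[U]$ is an induced subgraph of $\psi(G[I'])$, and the treewidth bound on $G[I']$ finishes. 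No appeal to locally bounded treewidth or finitely many $r$-types is needed.

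One genuine slip to fix: your block width should be governed by the \emph{range} $d$ of the transduction, not by the locality radius $r$. An $r$-local formula can create edges between arbitrarily distant vertices (e.g.\ $\rho(x,y)\equiv\top$), so ``an $r$-local formula can create edges spanning up to $2r+1$ layers'' is false. What guarantees property~(i) is that $\psi$ has range $d$; the paper takes blocks of $d$ consecutive layers for exactly this reason, and uses $r$ only for the fattening in property~(ii). Relatedly, handling the case $\psi(G)=\lnot\rho(G)$ by ``one additional global flip'' does not preserve bounded range on $\rho$; the paper instead keeps $\psi$ (which has range $d$) for property~(i) and uses the equality $\psi(G)=\rho(G)$ or $\lnot\rho(G)$ only to invoke $r$-locality in property~(ii).
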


By Theorem~\ref{thm:gaifman_bd_range} we know there exists a transduction of bounded range $T'$ and $k \in \N$ such that every graph in $\T(\C)$ is a $k$-flip of some graph from $\T'(\C)$. Thus, it suffices to show the following lemma.
\begin{lemma}
\label{lem:f-decompositions}
    Let $\C$ be a class of graphs of bounded genus and $\T$ be a transduction of bounded range. Then there exists a function $f: \N \to \N$ such that each $G \in \T(\C)$ has a $f$-decomposition.
\end{lemma}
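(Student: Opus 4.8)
The plan is to start from the product structure theorem (Theorem~\ref{thm:product_general}): every $G_0 \in \C$ is a subgraph of $H \boxtimes P$ for some graph $H$ of treewidth at most $t$ and some path $P = p_1 p_2 \cdots p_m$. The path $P$ immediately suggests the layering: for $G \in \T(\C)$, say $G$ is obtained from a colored version $G_0^+$ of such a $G_0$ by applying the bounded-range interpretation formula $\psi(x,y)$ (of range $d$, say) and then taking an induced subgraph; we set $V_i$ to be the vertices of $G$ whose underlying vertex in $H \boxtimes P$ lies in the ``band'' of layers $\{p_j : j \in ((i-1)\cdot cd, i \cdot cd]\}$ for a suitable constant $c$ depending only on $d$ (essentially $c = 2$ suffices, so that consecutive bands are far enough apart that range-$d$ edges cannot skip a band). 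Because $\psi$ has range $d$, every edge of $\psi(G_0^+)$ connects vertices within distance $d$ in $G_0^+$, hence within distance $d$ in $H \boxtimes P$, hence in $P$-coordinate within $d$ of each other; choosing the band width to be at least $d$ (in fact $\geq 2d$ to be safe) ensures property (i): every edge of $G$ stays within a band or between consecutive bands.

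For property (ii), fix $k$ and consider $k$ consecutive bands, i.e.\ a window $W$ of at most $k \cdot c d$ consecutive layers of $P$. I would argue that $G[V_i \cup \cdots \cup V_{i+k-1}]$ is interpretable, via a bounded-range formula, inside the induced subgraph $(H \boxtimes P)[V(H) \times W]$ together with the (boundedly many) colors; this subgraph is a subgraph of $H \boxtimes P'$ where $P'$ is a path on at most $kcd$ vertices, which has treewidth at most $t \cdot kcd + (kcd - 1)$ — in any case bounded by a function of $t$ and $k$. Then Theorem~\ref{thm:tw_trans_cw} gives that the clique-width of the interpreted graph is bounded by a function $g_\psi$ of that treewidth, so we may set $f(k) := g_\psi\bigl(\text{(treewidth bound for } H \boxtimes P')\bigr)$. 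The one subtlety is that the interpretation $\psi$, when restricted to the window, may ``see'' vertices just outside the window that are within distance $d$ of the boundary; to handle this cleanly one should take the window of $P$-layers slightly larger than $W$ (padding by $d$ on each side) when forming the host graph, evaluate $\psi$ there, and then restrict — the padded host still has treewidth bounded in terms of $t$ and $k$, so nothing changes. Since range-$d$ locality of $\psi$ means $G \models \psi(u,v)$ depends only on the distance-$d$ neighborhood of $\{u,v\}$, which for $u,v$ in the non-padded part lies inside the padded window, the evaluation is faithful.

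The main obstacle I expect is the bookkeeping around locality at the band boundaries: making precise that evaluating $\psi$ on the padded sub-host reproduces the correct induced subgraph of $G$, and that the coloring used by the transduction restricts correctly. This is where one must be careful that $\psi$ has bounded range (so that adjacency is determined locally) rather than merely being a general formula — if $\psi$ were arbitrary, a window could not be decoupled from the rest of the graph. A secondary, purely technical point is that $G$ is only a subgraph (not necessarily spanning or induced) of $H \boxtimes P$ and $\T$ may delete vertices; but subgraphs and vertex-deletions only shrink the treewidth/clique-width of the relevant host subgraphs, and the layering $V_1, \ldots, V_\ell$ is defined on the vertices of $G$ that survive, so these operations are harmless. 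Everything else — choosing $c$, bounding $\mathrm{tw}(H \boxtimes P')$, invoking Theorem~\ref{thm:tw_trans_cw} — is routine.
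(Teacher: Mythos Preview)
Your overall strategy---layer along the path factor of the product structure, group $d$ consecutive layers into bands to get property (i), and pad the window for property (ii)---matches the paper's proof. But there is a genuine gap in your argument for (ii).

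You write that ``range-$d$ locality of $\psi$ means $G \models \psi(u,v)$ depends only on the distance-$d$ neighborhood of $\{u,v\}$''. This is false: having \emph{range} $d$ only says that $G\models\psi(u,v)$ forces $\dist_G(u,v)\le d$; it says nothing about which part of $G$ determines the truth value of $\psi(u,v)$. For instance, $\psi(x,y)\equiv E(x,y)\wedge\tau$, where $\tau$ is any first-order sentence, has range $1$, yet whether $\psi(u,v)$ holds depends on the global truth of $\tau$ in $G$. For such $\psi$ you cannot conclude that $\psi(G)[S]$ agrees with $\psi(G[S'])[S]$ on the padded window $S'$, so your faithfulness claim fails. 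Bounded range is exactly what you need for property (i), but it is \emph{not} what makes the window decouple from the rest of the graph in (ii).

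The paper closes this gap by invoking Gaifman's theorem (Corollary~\ref{cor:gaifman}) before doing anything else: on any fixed $G$, the formula $\psi$ defines a clique, an edgeless graph, $\rho(G)$, or $\lnot\rho(G)$, where $\rho$ is genuinely $r$-\emph{local} (for some $r$ depending only on $\psi$). The first two cases are trivial; in the last two, $r$-locality is precisely the statement that the truth of $\rho(u,v)$ depends only on $G[N_r^G[u]\cup N_r^G[v]]$, which is what lets the padded-window argument go through. Note that the padding radius is $r$ (the Gaifman locality radius), while the band width is $d$ (the range); these are two different parameters playing two different roles, and your proposal collapses them into one.
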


In the proof of Lemma~\ref{lem:f-decompositions} we will use the following observation, which follows easily from the definition of strong product.
\begin{lemma}
\label{lem:tw_product}
For any graph $G$ and number $k$, we have that $tw(G \boxtimes P_k) \le k (tw(G)+1) - 1$.
\end{lemma}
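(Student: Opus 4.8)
The plan is to prove Lemma~\ref{lem:tw_product}, the claim that $tw(G \boxtimes P_k) \le k(tw(G)+1) - 1$, by lifting an optimal tree decomposition of $G$ to a tree decomposition of $G \boxtimes P_k$ of the same shape, simply fattening each bag by a factor of $k$.

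\medskip

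First I would fix a tree decomposition $(T, \{B_t\}_{t \in V(T)})$ of $G$ of width $tw(G)$, so $|B_t| \le tw(G) + 1$ for every node $t$. Recall that $V(G \boxtimes P_k) = V(G) \times [k]$, and that every edge of $G \boxtimes P_k$ joins a pair $(u,i),(v,j)$ with $uv \in E(G)$ or $u = v$, and $|i - j| \le 1$; in particular, if $(u,i)(v,j) \in E(G\boxtimes P_k)$ then either $u = v$ or $uv \in E(G)$. The construction is then to keep the same tree $T$ and set $B'_t := B_t \times [k]$ for each $t \in V(T)$. Each new bag has size at most $k(tw(G)+1)$, which gives the claimed width bound $k(tw(G)+1) - 1$ once we verify this is a valid tree decomposition.

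\medskip

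The verification has the two standard parts. For \emph{edge coverage}: given an edge $(u,i)(v,j)$ of $G \boxtimes P_k$, we have $u = v$ or $uv \in E(G)$; in the first case any bag $B_t$ containing $u$ yields a bag $B'_t$ containing both $(u,i)$ and $(v,j) = (u,j)$, and in the second case a bag $B_t$ with $u, v \in B_t$ yields $B'_t \ni (u,i),(v,j)$. For \emph{connectivity} (the bags containing a fixed vertex form a connected subtree): for a vertex $(u,i) \in V(G \boxtimes P_k)$, the set $\{t : (u,i) \in B'_t\}$ equals $\{t : u \in B_t\}$ by construction, which is connected in $T$ since the original decomposition was valid. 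This finishes the argument; none of the steps should present a real obstacle, as the only point to watch is that the factor is exactly $k$ and not $k+1$, which follows from reusing the \emph{same} tree $T$ rather than taking a product of decompositions.
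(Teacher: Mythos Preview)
Your argument is correct and is exactly the natural construction the paper has in mind: the paper does not spell out a proof but merely records the lemma as an observation that ``follows easily from the definition of strong product,'' and fattening each bag of an optimal tree decomposition of $G$ to $B_t \times [k]$ is precisely that easy verification. There is nothing to add.
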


\begin{proof}[Proof of Lemma~\ref{lem:f-decompositions}]
    Let $G \in \C$ be arbitrary and let $d$ be the range of $\T$.
    By Theorem~\ref{thm:product_general} there exists $t$ such that every $G$ is a subgraph of $H \boxtimes P$, where $P$ is a path and $tw(H) \le t$ for some fixed $t$ depending on (the genus of) $\C$. Let $G \in \C$ be arbitrary, and fix $H$ and $P$ such that $G$ is a subgraph of $H \boxtimes P$. Let $\psi(x,y)$ be the interpretation formula of $T$. By Corollary~\ref{cor:gaifman} we know that $\psi(G)$ is a clique, edgeless graph, $\rho(G)$ or $\lnot\rho(G)$, where $\rho$ is a $r$-local formula. If $\psi(G)$ is a clique or edgeless graph, then $cw(\psi(G)) \le 2$ an we can take $V_1 = V(G)$ as our $f$-decomposition of $\psi(G)$. In the rest of the proof we will therefore assume that $\psi(G) = \rho(G)$ or $\psi(G) = \lnot\rho(G)$; both of these cases will be treated in the same fashion, the only important part is that $\rho$ is $r$-local.

    Let $1,\ldots, p$ be the vertices of $P$. For every vertex $v = (a,i)$ of $G$, where $a \in V(H)$ and $i \in V(P)$, we will denote $\pi_1(v) = a$ and $\pi_2(v) = i$. We partition $V(G)$ into sets $U_1,\ldots,U_p$ by setting 
    $U_i = \{(v \in V(G)~|~\pi_2(v) = i\}$. Note that from the definition of strong product and construction of sets $U_i$ it follows that if $\dist_G(u,v) \le d$, then we have that $u \in V_i$, $v \in V_j$ with $|i-j|\le d$.
    Also note that for each $i$ the graph $G[U_i]$ is isomorphic to a subgraph of $H$, and so each $G[U_i]$ has treewidth at most $t$.
    
    We define the $f$-decomposition of $\psi(G)$ (with $f$ to be determined later) as follows: We set 
    $$V_i:= \bigcup_{j=d(i-1)+1}^{di} U_j,$$
    that is, we set $V_1 := U_1 \cup \ldots \cup U_d$, $V_2 := U_{d+1} \cup \ldots U_{2d}$ etc. We now argue that the partition $V_1,\ldots,V_{\lceil\frac{p}{d}\rceil}$ satisfies the conditions from the definition of $f$-decomposition.

    (i) We  argue that if $|i-j| \ge 2$ and $u \in V_i$, $v \in V_j$, then $uv \not\in E(\psi(G))$. From the definition of $V_i$ and $V_j$ it is easy to verify that $u \in U_a$ and $v \in U_b$ for which $|a-b| > d$, and so $dist_G(u,v) > d$. Thus, since $\psi$ is a formula of range $d$, it cannot create and edge between $u$ and $v$.

    (ii) Let $k$ and $i \le \lceil\frac{p}{d}\rceil - k  + 1$ be arbitrary. Set $$S := V_i \cup \ldots \cup V_{i + k - 1} = U_{d(i-1) + 1} \cup \ldots \cup U_{d(i+k-1)} $$
    and $S' := U_{d(i-1) + 1 -r} \cup \ldots \cup U_{d(i+k-1) + r} $. Note that from the definition of $S'$ we have for every $v \in S$ that $N_r^G[v] \subseteq S'$.

    Our goal is to bound the clique-width of $\psi(G)[V_i\cup \ldots \cup V_{i+k -1}] = \psi(G)[S]$ by a function of $k$ (where the function itself can depend on $\C$ and $\psi$). We will proceed by showing that $\psi(G)[S]$ is an induced subgraph of $\psi(G[S'])$ and that $\psi(G[S'])$ has clique-width bounded by a function of $k$. Since clique-width is closed under taking induced subgraphs, this will finish the proof.

    First we will argue that $\psi(G)[S]$ is an induced subgraph of $\psi(G[S'])$. We will assume that $\psi(G) = \rho(G)$; the proof for the case when $\psi(G) = \lnot\psi(G)$ is completely analogous.
    Since $\rho$ is $r$-local, we know that for any $u,v \in V(G)$ we have 
    \begin{equation}
    \label{eq:1}
       G \models \rho(u,v) \Longleftrightarrow G[N_{r'}^G[u]\cup N_{r'}^G[v]] \models \rho(u,v)  
    \end{equation}   
    Similarly we have for $u,v \in S'$ that 
    \begin{equation}
    \label{eq:2}
        G[S'] \models \rho(u,v) \Longleftrightarrow (G[S'])[N_{r'}^{G[S']}[u]\cup N_{r'}^{G[S']}[v]] \models \rho(u,v)
    \end{equation}
    Since for each $u, v\in S$ we have that $N_{r}^G[v] \subseteq S'$ and $N_{r}^G[v] \subseteq S'$, we have that 
    \begin{equation}
       \label{eq:3}
       G[N_{r}^G[u]\cup N_{r}^G[v]] = (G[S'])[N_{r}^{G[S']}[u]\cup N_{r}^{G[S']}[v]] 
    \end{equation}
 
    Thus, for any $u,v \in S$, by combining (\ref{eq:1}) and (\ref{eq:2}) with  (\ref{eq:3}) we get that $uv$ is an edge in $\rho(G)$ if and only if $uv$ is an edge in $\rho(G[S'])$, as desired.
    
    Second, we argue that $\psi(G[S'])$ has bounded clique-width.
    Since $S'$ consists of  $d(i+k-1) + r - (d(i-1) -r) = kd + 2r$ sets $U_i$, we know that $G[S']$ is isomorphic to a subgraph of $H \boxtimes P_{kd + 2r}$, and hence $tw(G[S']) \le (kd + 2r)(tw(H)+1) - 1$ by Lemma~\ref{lem:tw_product}. By Theorem~\ref{thm:tw_trans_cw} we know that $cw(\psi(G[S'])) \le g(tw(G[S']))$ for some function depending on $\psi$.
    Thus we can set $f(k) := g((kd + 2r)(t+1)-1)$. Since $t$ depends on the graph class $\C$, and $d$ and $r$ depend on $\psi(x,y)$, this finishes the proof.
\end{proof}

\section{Reduction to transductions of bounded range}
\label{sec:reduction}

Our goal in this section will be to prove the following theorem, which is a crucial ingredient in the proof of our main result, Theorem~\ref{thm:main}.

\begin{theorem}
\label{thm:reduction_bd_range}
   Let $\C$ be a class of graphs such that $\Q \subseteq \T(\C)$ for some transduction $\T$. Then there exists a transduction $\T'$ of bounded range such that $\Q \subseteq \T'(\C)$.
\end{theorem}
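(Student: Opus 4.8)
The plan is to combine Theorem~\ref{thm:gaifman_bd_range} with a structural argument that rules out the "flip" and "basic local sentence" parts of a transduction when the output is the class of cubes. Starting from the hypothesis $\Q \subseteq \T(\C)$, Theorem~\ref{thm:gaifman_bd_range} immediately gives a bounded-range transduction $\RT$ and a constant $k$ such that every cube $Q_N$ is a $k$-flip of some graph $H_N \in \RT(\C)$. So it suffices to "undo" the $k$-flip: I want to show that for each $N$ there is some (larger) cube $Q_{M}$ appearing inside $H_N$, or inside an induced subgraph of $H_N$, as an \emph{induced} subgraph, so that the identity-like bounded-range transduction that simply selects a cube-shaped induced subgraph recovers all cubes. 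Concretely, the idea is that $Q_N$ is a $k$-flip of $H_N$ means $V(Q_N)$ is partitioned into parts $W_1,\dots,W_k$ and $H_N = Q_N$ with the adjacency between some pairs $(W_a,W_b)$ complemented; by pigeonhole one of the parts, say $W_1$, contains a sub-grid $Q_M$ with $M \gg N^{1/?}$ that lies \emph{entirely} within one flip class, hence the induced subgraph $H_N[W_1 \cap V(Q_M)] = Q_N[W_1\cap V(Q_M)]$ is again a genuine cube with no flips applied to it.

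The key step is therefore the combinatorial claim: \emph{inside any partition of $[N]^3$ into $k$ classes, some class contains an induced sub-cube $Q_M$ with $M$ tending to infinity as $N\to\infty$}. This should follow from a Ramsey-type / density argument on the 3D-grid, iterating a one-dimensional pigeonhole: restrict to a combinatorial line, find a long monochromatic-in-one-coordinate slab, and recurse over the three coordinates, or alternatively invoke a grid-Ramsey / Gallai-type statement. Since $k$ is a fixed constant, we get $M \ge h_k(N) \to \infty$ for some function $h_k$; passing to a subsequence of cubes (which is harmless, as $\Q$ is closed under this by the subtlety noted in the preliminaries — any cube is an induced subgraph of a larger one) yields cubes of every side length as induced subgraphs of graphs in $\RT(\C)$. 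Finally, set $\T'$ to be $\RT$ postcomposed with the trivial transduction that marks a vertex set and takes the induced subgraph on it; since the trivial "restrict to marked set" transduction has range $0$ (actually range $\infty$ vacuously, but one can take a range-$0$ interpretation formula $\psi(x,y)=E(x,y)$), the composition $\T' = (\text{restrict})\circ\RT$ is again of bounded range by the composition fact for bounded-range transductions stated in Section~\ref{sec:logic}, and $\Q \subseteq \T'(\C)$.

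The main obstacle I expect is making the combinatorial claim quantitatively clean and correctly handling the \emph{non-induced} subtlety: a $k$-flip partitions the vertex set of the \emph{output} graph, so I must be careful that selecting $W_1 \cap V(Q_M)$ as an induced subgraph of $H_N$ really kills all flips — this works only if the chosen vertex set lies inside a single flip class, which is exactly what the monochromatic-sub-cube claim delivers (within one class, no pair of distinct classes is involved, so no adjacency is complemented). A secondary subtlety: the flip classes of the $k$-flip are subsets of $V(H_N)=V(Q_N)$, and the transduction $\RT$ may have produced $H_N$ on a vertex set that is only a subset of some graph's vertex set in $\C$; but this does not matter, since we only need $H_N \in \RT(\C)$ and then apply one more (bounded-range) transduction. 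One should also double-check that "restrict to a marked induced subgraph" is genuinely expressible as a transduction of range $0$ (it is: no edges are added, and deletion of vertices is exactly step (iii) of a transduction), so that the bounded-range property is preserved under the composition.
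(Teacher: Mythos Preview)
Your central combinatorial claim --- that any partition of $[N]^3$ into $k$ classes contains a monochromatic induced sub-cube $Q_M$ with $M\to\infty$ --- is false already for $k=2$. Color each vertex $(i,j,k)$ by the parity of $i+j+k$. Every edge of $Q_N$ joins vertices of opposite parity, so each color class is an independent set in $Q_N$ and hence contains no induced $Q_2$. Thus there is no Ramsey-type statement of the kind you are hoping for, and the ``select a monochromatic sub-cube'' strategy cannot work. (Even if the claim were true, there is a second gap: a $k$-flip may complement adjacency \emph{within} a single class --- this is the case of a loop in the flip-graph $H$ --- so restricting to one class $W_a$ can yield $\overline{Q_M}$ rather than $Q_M$, and complementation is not a bounded-range operation.)

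The paper's argument avoids this obstruction by a dichotomy that exploits the bounded degree of $Q_N$ rather than a Ramsey argument. Roughly: if some vertex of $G$ has a large $r$-ball contained entirely in the union of the \emph{unflipped} parts (those $V_i$ with $i$ isolated in $H$), then that ball is untouched by the flip and already contains an induced $Q_{\lfloor r/3\rfloor}$; otherwise every vertex is within bounded distance of some flipped part $V_i$, and since any two vertices in adjacent flipped parts $V_i,V_j$ are at $G$-distance $\le 3$ (because their $Q_N$-degrees are $\le 6$ while $|V_i|,|V_j|>12$, so a common $Q_N$-nonneighbour becomes a common $G$-neighbour after the flip), one deduces that $G$ has diameter bounded by a function of $k$. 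In the second case the flip itself can be implemented by a formula of bounded range. Composing with $\RT$ finishes the proof.
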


We now briefly outline the proof of Theorem~\ref{thm:reduction_bd_range}. Let $\C$ be a class of graphs and $\T$ a transduction such that $\Q \subseteq \T(\C)$. By Theorem~\ref{thm:gaifman_bd_range} we know that there exist $k \in \N$ and a transduction $\RT$ of bounded range such that every $Q_N$ is a $k$-flip of some graph in $\RT(\C)$. We will show that there is a bounded range transduction $\ST$ such that $\Q \subseteq \ST(\RT(\C))$. Then, by setting $\T':=\ST\circ \RT$ and noting that a combination of bounded-range transductions is a bounded-range transduction, we obtain Theorem~\ref{thm:reduction_bd_range}. 

To find transduction $\ST$, we analyze how  we can obtain a large cube $Q_N$ from a graph $G$ by complementing adjacency between some of $k$-parts of a partition of $V(G)$. To perform this analysis, we first extend $G$ with the precise information about how we obtain $Q_N$ from it; this leads to the notion of \emph{$k$-flip structure} (see the next section). We then analyse $k$-flip structures that produce cubes, and establish some of their properties (Section~\ref{sec:flip_str_properties}). Using these properties, we show (Lemma~\ref{lem:flip_bd_range}) that the graphs $\RT(\C)$ either contain large cubes as induced subgraphs (in this case transduction $\ST$ does not have to do anything, just take na induced subgraph) or they are of bounded diameter (in which case the transduction that prodices the flip of $G$ is already of bounded range).

\subsection{Flip-structures}

Let $H$ be a graph that can have loops on its vertices. We say that a vertex $v$ of $H$ is \emph{isolated} if it does not have a neighbor in $H$ and does not have a loop. We denote the set of isolated vertices of $H$ by $I(H)$.

\begin{definition}
A \emph{$k$-flip structure} is a triple $(G, \PP, H)$ where:
\begin{itemize}
    \item $G$ is a graph.
    \item $\PP = \{V_1,\ldots, V_\ell\}$ is a partition of $V(G)$ where we allow some parts to be empty.
    \item $H$ is a graph with with vertex set $\{1, \ldots, k\}$ that can have loops, and in which we have that if $V_i = \emptyset$, then  $i \in I(H)$.
\end{itemize}
\end{definition}

We remark that the fact that we allow for some parts $V_i$ to be empty is for technical convenience; this is useful because when we pass to $k$-flip substructure (see below), some parts may become empty. Also, we force the indices of empty parts to be isolated in $H$ to guarantee that all vertices of every nontrivial connected components of $H$ correspond to non-empty parts.

We define function $\rho: V(G) \to [k]$ by setting $\rho(v) = i$, where $i$ is the unique part $V_i \in \PP$ such that $v \in V_i$.
To every $k$-flip structure $(G, \PP, H)$ we associate the graph $F(G, \PP, H)$ with vertex set $V(G)$ as follows: 
We go through all pairs of distinct vertices $u,v$ of $G$ and complement (flip) the adjacency between them if $u \in V_i$ and $v \in V_j$ and $ij$ is an edge of $H$. In symbols, we set the edge set of $F(G, \PP, H)$ to be the set
$$\{uv~|~ u\not=v \land (uv \in E(G) \oplus \rho(v)\rho(u) \in E(H))\}$$
where $\oplus$ denotes the XOR operation. 
For a class $\C$ of flip structures we define 
$F(\E) = \{F(G, \PP, H)~|~(G, \PP, H) \in \E\}$.

Let $(G,\PP,H)$ be a $k$-flip structure.
A \emph{$k$-flip substructure} of $(G,\PP,H)$ is a $k$-flip structure $(G',\PP',H')$ where $G'$ is an induced subgraph of  $G$, each $V_i' \in \PP'$ is defined by $V_i':=V_i \cap V(G')$ and in which $H'$ is obtained from $H$ by isolating every $i \in [k]$ with $V_i' = \emptyset$. In particular, every induced subgraph $G'$ of $G$ determines a $k$-flip substructure $(G',\PP',H')$. It is easily checked that if $A \subseteq V(G)$ and  $G' = G[A]$, then $F(G,\PP,H)[A] = F(G',\PP',H')$.

Finally, we say that a class $\E$ of $k$-flip structures is \emph{hereditary} if for every $(G,\PP,H) \in \E$ and every induced $k$-flip substructure $(G',\PP',H')$ of $(G,\PP,H)$ we have $(G',\PP',H') \in \E$.

Before we proceed further, we make some remarks about how we will think of $k$-flip structures in the rest of this section.
\begin{itemize}
    \item Let $(G,\PP,H)$ be a $k$-flip structure and consider the subset $S$ of $V(G)$ given by $S:=\cup_{i \in I(H)} V_i$. Since vertices from $I(H)$ are isolated in $H$, the sets $V_i$ with $i \in I(H)$ do not take part in any flips, and so $S$ is the set of vertices that are not affected when going from $G$ to $F(G,\PP,H)$. This means that we have $G[S] = F(G,\PP,H)[S]$.
    \item The set $\Bar{S}$ given by $V(G) \setminus S$ is the set of vertices whose adjacency is affected when going from $G$ to $F(G,\PP,H)$. This set can be further partitioned as follows. For any connected component $C$ of $H - I(H)$, let $U(C):=\cup_{i \in V(C)} V_i$. Let $C_1,\ldots, C_m$ be the connected components of $H - I(H)$. Then we have $\Bar{S} = \bigcup_{j=1}^m U(C_j)$. We will later see (Lemma~\ref{lem:properties}) that the sets $U(C_j)$ induce subgraphs of $G$ of small diameter.
    \item Finally, remark about notation: In the rest of the section we sometimes treat connected components of $H - I(H)$ as sets of vertices instead of graphs to unclutter the notation, so we will for example we write $\cup_{i \in C} V_i$ instead of  $\cup_{i \in V(C)} V_i$.
\end{itemize}




\subsection{Properties of flip structures with $F(G,\PP,H) = Q_N$}
\label{sec:flip_str_properties}

In this section we establish several useful properties of $k$-flip structures that produce cubes $Q_N$. We start by a simple observation that identifies a situation when a large cube is already present in $G$ and remains there after performing the flips.  
\begin{lemma}
\label{lem:induced_subgrid}
Let $(G,\PP,H)$ be such that $F(G,\PP,H) = Q_N$ and let $r \le N$. If there is a vertex $v \in V(G)$ such that $N_{r}^G[v] \subseteq \cup_{i \in I(H)}V_i$, then $G$ contains $Q_{\lfloor r/3\rfloor }$ as an induced subgraph.    
\end{lemma}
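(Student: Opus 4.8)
The plan is to exploit the locality of the flip operation. Since $F(G,\PP,H)=Q_N$ and $N_r^G[v]\subseteq\bigcup_{i\in I(H)}V_i=:S$, we first observe that the flip does nothing inside $S$: as remarked in the text, $G[S]=F(G,\PP,H)[S]$. So $G[N_r^G[v]]=Q_N[N_r^G[v]]$ as induced subgraphs of $Q_N$. The subtlety is that $N_r^G[v]$ is a ball in $G$, not in $Q_N$, so I must relate the two metrics. The key point is that on the set $S$ the two graphs coincide, so within $S$ distances in $G$ and in $Q_N$ agree; but a priori a path in $Q_N$ between two vertices of $S$ could leave $S$, making the $Q_N$-distance smaller than the $G$-distance. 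However, what I actually need is only one direction plus a containment argument: I want to find a large cube sitting inside $G[S]$.

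The concrete approach: let $w=v$ and consider $B:=B_{Q_N}^{\lfloor r/3\rfloor}(v)$, the ball of radius $\lfloor r/3\rfloor$ around $v$ in the cube $Q_N$ itself (using the $\ell_1$ metric on $[N]^3$). I claim $B\subseteq S$. Indeed, take $u\in B$, so $\dist_{Q_N}(v,u)\le \lfloor r/3\rfloor$, and let $P$ be a shortest $v$--$u$ path in $Q_N$; it has length $\le \lfloor r/3\rfloor\le r$. Now I argue by induction along $P$ that every vertex of $P$ lies in $S$ and that consecutive vertices of $P$ are adjacent \emph{in $G$} as well: the start vertex $v\in S$; and if a prefix of $P$ lies in $S$ and ends at $x$, then since $x\in S=\bigcup_{i\in I(H)}V_i$, its $G$-neighborhood equals its $Q_N$-neighborhood (flips touching $x$ only happen if $\rho(x)$ has a non-loop edge or loop in $H$, which it does not). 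But wait — this shows the next vertex $y$ of $P$ is a $G$-neighbor of $x$ only if $y$'s part is also unaffected; more carefully, the flip between $x$ and $y$ is triggered by the edge $\rho(x)\rho(y)$ of $H$, and since $\rho(x)\in I(H)$ it has no edges in $H$ (not even a loop), so no flip occurs on the pair $xy$, hence $xy\in E(G)\iff xy\in E(Q_N)$. Since $xy\in E(Q_N)$, we get $xy\in E(G)$, so $y$ is within $G$-distance one more from $v$. Iterating, $u\in N_r^G[v]\subseteq S$. Thus $B\subseteq S$.

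Having $B\subseteq S$, we conclude $G[B]=Q_N[B]=F(G,\PP,H)[B]$, all equal. Finally, $Q_N[B]$ — where $B$ is an $\ell_1$-ball of radius $\lfloor r/3\rfloor$ in $[N]^3$ — contains an axis-aligned subcube of side roughly $\lfloor r/3\rfloor$, or at least a cube of side $\lfloor r/3\rfloor$ after a cheap packing argument: a ball of radius $\rho$ in the $\ell_1$ metric contains a combinatorial cube with side length about $\rho$ (for instance, centering an $\lfloor \rho/? \rfloor$-cube; a clean choice is to note the ball of radius $\rho$ around an interior point contains $\{(a,b,c): |a-a_0|+|b-b_0|+|c-c_0|\le\rho\}$, which contains $[a_0, a_0+\lfloor\rho/3\rfloor]\times\cdots$). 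Matching the claimed bound $Q_{\lfloor r/3\rfloor}$ then just requires checking the constants; I would pick the cleanest sub-cube fitting in the $\ell_1$-ball of radius $\lfloor r/3\rfloor$ and verify it has side $\ge \lfloor r/3\rfloor$ after possibly re-indexing, or adjust the statement's constant in the write-up. The main obstacle is precisely this last bookkeeping step — reconciling the $\ell_1$-ball radius with a clean axis-aligned cube of the stated side length, together with the corner cases when $v$ is near the boundary of $[N]^3$ (where one should instead pick the center of $B$ to be an interior point, using $r\le N$ to guarantee enough room). Everything else is the straightforward "flips don't touch $I(H)$" observation applied along a shortest path.
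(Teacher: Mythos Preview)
Your approach is correct and is essentially the paper's: the two observations there are exactly (i) that $G[S]=Q_N[S]$ for $S=\bigcup_{i\in I(H)}V_i$ (flips do not touch parts indexed by $I(H)$), and (ii) that an $\ell_1$-ball of radius $r$ in $Q_N$ contains an induced copy of $Q_{\lfloor r/3\rfloor}$. Your inductive path-tracking argument, showing that any $Q_N$-geodesic starting at $v$ is also a $G$-path and hence stays inside $N_r^G[v]\subseteq S$, is precisely the glue combining them; the paper leaves this step implicit under ``follows easily''.

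The only slip is the radius you chose for $B$. You set $B=N^{Q_N}_{\lfloor r/3\rfloor}[v]$, but an $\ell_1$-ball of radius $\rho$ only contains a subcube of side $\lfloor \rho/3\rfloor$ (as you yourself compute a few lines later), so with $\rho=\lfloor r/3\rfloor$ you end up with side $\approx r/9$, not $\lfloor r/3\rfloor$. The fix is trivial: take $B=N^{Q_N}_{r}[v]$ instead. Your induction already handles $Q_N$-paths of length up to $r$, so the proof that $B\subseteq N_r^G[v]\subseteq S$ goes through verbatim, and then the ball of radius $r$ contains $Q_{\lfloor r/3\rfloor}$ exactly (place the subcube with $v$ at a corner; the hypothesis $r\le N$ gives enough room). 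With that one change the constants match the stated lemma and no ``adjusting the write-up'' is needed.
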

\begin{proof}
  The lemma follows easily from the following two observations.

Observation 1: Let $N$ be arbitrary and let $r$ be such that $3r \le N$. Let $v$ be an arbitrary vertex of $V(Q_N)$. Then $N^{Q_N}_{r}[v]$ contains $Q_{\lfloor r/3\rfloor }$ as an induced subgraph.

Observation 2: Let $(G,\PP,H)$ be a $k$-flip structure and set $S = \cup_{i \in I(H)} V_i$. Then $G[S] = F(G,\PP, H)[S]$, since the vertices from $S$ are not affected by the operation $F$ that takes $G$ to $F(G,\PP,H)$.  
\end{proof}

The next three lemmas will be used in the proof of Lemma~\ref{lem:properties} that will let us focus only on flip-structures with desirable properties. 

\begin{lemma}
\label{lem:only_large_classes}
Let $(G,\PP,H)$ be a $k$-flip structure such that $F(G,\PP,H) = Q_{3N}$. Assume that $V_i$ is a part of $\PP$ with $|V_i| \le 12$. Then there exists a $k$-flip substructure $(G',\PP',H')$ of $(G,\PP,H)$ with $V_i' = \emptyset$ such that $F(G',\PP',H') = Q_N$.
\end{lemma}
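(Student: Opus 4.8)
The idea is that a cube $Q_{3N}$ contains many vertex-disjoint copies of $Q_N$, positioned far apart from each other, and only a bounded number of them can be "hit" by the small part $V_i$. First I would make precise the following packing fact about cubes: inside $Q_{3N}$ one can find three pairwise vertex-disjoint induced subcubes $Q_N$, say on the vertex sets $S_1=[1,N]^3$, $S_2=[N+1,2N]^3$ and $S_3=[2N+1,3N]^3$ (any collection of translates of $[1,N]^3$ along the main diagonal works), each of which induces a graph isomorphic to $Q_N$. Since $|V_i|\le 12$ is a bounded number and there are three such disjoint blocks, by pigeonhole at least one block $S_j$ satisfies $S_j\cap V_i=\emptyset$; in fact any single vertex cut removes at most... well, more simply, at least one of the three blocks is completely disjoint from $V_i$ because the blocks are vertex-disjoint and $V_i$ is nonempty but we only need that $V_i$ cannot meet all three — which is immediate if $|V_i|$ were, say, $\le 2$; with $|V_i|\le 12$ we just take enough (thirteen) pairwise-disjoint diagonal blocks instead of three, so that some block $S_j$ is disjoint from $V_i$.

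Once such a block $S_j$ with $S_j\cap V_i=\emptyset$ is fixed, I would set $G' := G[S_j]$ and let $(G',\PP',H')$ be the induced $k$-flip substructure of $(G,\PP,H)$ determined by $S_j$, as defined in the paper. By the displayed fact in the excerpt that $F(G,\PP,H)[A]=F(G',\PP',H')$ for $A=V(G')$, we get $F(G',\PP',H') = F(G,\PP,H)[S_j] = Q_{3N}[S_j] \cong Q_N$. Moreover $V_i' = V_i\cap S_j = \emptyset$, so by the definition of $k$-flip substructure the vertex $i$ is isolated in $H'$, which is exactly the conclusion "$V_i'=\emptyset$" required. (If one insists the output be literally equal to $Q_N$ rather than isomorphic to it, I would invoke the paper's stated convention that induced subgraphs isomorphic to cubes are treated as cubes.) This completes the argument: the substructure on $S_j$ is the desired $(G',\PP',H')$.

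The only real content is the first step — the combinatorial packing claim that $Q_{3N}$ (or $Q_{mN}$ for a suitable bounded $m$) contains $m^3$, or at least $13$, pairwise vertex-disjoint induced copies of $Q_N$. This is routine: partition $[1,mN]^3$ into the $m^3$ axis-aligned boxes $\prod_{t=1}^{3}[(a_t-1)N+1,a_tN]$ for $(a_1,a_2,a_3)\in[m]^3$; each such box induces a subgraph of $Q_{mN}$ isomorphic to $Q_N$ (the within-box $L^1$-distance-one edges are exactly those of $Q_N$, and there are of course also edges between boxes, but within a box the induced graph is precisely $Q_N$), and the boxes are pairwise disjoint. Taking $m=3$ already yields $27\ge 13$ disjoint copies, so with $|V_i|\le 12$ some copy avoids $V_i$ entirely. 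I expect no genuine obstacle here; the statement with the specific constants $12$ and $Q_{3N}$ is presumably chosen so that $m=3$ suffices, and the bookkeeping of passing to the induced $k$-flip substructure is handled by the machinery already set up in the section.
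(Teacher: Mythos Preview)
Your final argument is correct and is exactly the paper's proof: partition $V(Q_{3N})=[1,3N]^3$ into the $27$ axis-aligned $N\times N\times N$ boxes, use pigeonhole (since $27>12$) to find a box $S$ disjoint from $V_i$, and pass to the induced $k$-flip substructure on $S$, invoking $F(G,\PP,H)[S]=F(G',\PP',H')$ and the convention that subcubes are treated as cubes. The earlier detours in your plan (three diagonal blocks, then ``thirteen diagonal blocks'') don't actually work inside $Q_{3N}$---there are only three diagonal translates of $[1,N]^3$---so you should drop those and go straight to the $27$-box partition you describe at the end.
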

\begin{proof}
 We partition $Q_{3N}$ into $27$ cubes of side length $N$ as follows: Set $I_1 = [1,N]$, $I_2 = [N+1, 2N]$, $I_3 = [2N+1, 3N]$. For each $a,b,c \in \{1,2,3\}$ consider the set $S^{abc}$ that consists of vertices $v \in V(Q_{3N})$ with $\pi_1(v) \in I_a$, $\pi_2(v) \in I_b$ and $\pi_3(v) \in I_c$. Then each $Q^{abc} := Q_{3N}[S^{abc}]$ is a cube of side length $N$. 
 
Since $|V_i| \le 12$, by pigeonhole principle there are $a,b,c \in  \{1,2,3\}$ such that $S^{abc} \cap V_i = \emptyset$. Then $G' = G[S^{abc}]$ determines a $k$-flip substructure $(G', \PP', H')$ of $(G,\PP,H)$ such that $F(G', \PP', H') = Q^{abc}$.
\end{proof}

\begin{lemma}
\label{lem:small_dist}
Let $(G,\PP,H)$ be such that $F(G,\PP,H) = Q_N$. Let $i,j$ be such that $ij \in E(H)$, $|V_i| > 12$ and $|V_j| > 12$. Then we have $dist_G(u,v) \le 3$ for every $u,v \in V_i\cup V_j$.
\end{lemma}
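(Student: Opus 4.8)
The plan is to argue by contradiction: suppose $ij \in E(H)$, that $|V_i| > 12$ and $|V_j| > 12$, but there exist $u \in V_i \cup V_j$ and $v \in V_i \cup V_j$ with $\dist_G(u,v) \geq 4$. The key point is that $F(G,\PP,H) = Q_N$, and $Q_N$ has maximum degree $6$, so in particular every vertex of the flipped graph has at most $6$ neighbors. Since $ij \in E(H)$, the operation $F$ complements the adjacency between $V_i$ and $V_j$ entirely. So for a vertex $w \in V_i$, its neighborhood in $F(G,\PP,H)$, restricted to $V_j$, equals $V_j \setminus N_G(w)$ (plus possibly $w$ itself if $i = j$, but here we may first handle the case $i \neq j$). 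Thus if $w \in V_i$ is $G$-non-adjacent to many vertices of $V_j$, then $w$ has many neighbors in $F(G,\PP,H)$, contradicting $\Delta(Q_N) = 6$ — unless $w$ is $G$-adjacent to almost all of $V_j$. Symmetrically every vertex of $V_j$ must be $G$-adjacent to almost all of $V_i$.

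Making this quantitative: for $w \in V_i$, the number of $V_j$-neighbors of $w$ in $F(G,\PP,H)$ is $|V_j| - |N_G(w) \cap V_j| \geq |V_j| - |N_G(w) \cap V_j|$, and this is at most $6$; hence $|N_G(w) \cap V_j| \geq |V_j| - 6 > 6$ (using $|V_j| > 12$). In particular every vertex of $V_i$ has a $G$-neighbor in $V_j$ (indeed at least $|V_j|-6 \geq 7$ of them), and symmetrically. Now take any $u, v \in V_i \cup V_j$. I want to exhibit a $G$-path of length at most $3$ between them. Consider the auxiliary bipartite-like structure on $V_i \cup V_j$ given by the $G$-edges between $V_i$ and $V_j$: each vertex of $V_i$ is $G$-joined to $\geq |V_j| - 6$ vertices of $V_j$ and vice versa. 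If $u \in V_i$ and $v \in V_j$: pick a $G$-neighbor $u' \in V_j$ of $u$ and a $G$-neighbor $v' \in V_i$ of $v$; since $u'$ misses at most $6$ vertices of $V_i$ and $v'$ misses at most $6$ vertices of $V_j$, and the relevant sets have size $> 12$, we can in fact choose them so that $u'$ is $G$-adjacent to $v$ directly — actually more cleanly: $u$ has $\geq |V_j|-6 \geq 7$ neighbors in $V_j$ and $v$ (being in $V_j$) is $G$-adjacent to $\geq |V_i| - 6 \geq 7$ vertices of $V_i$; but I need a common structure. The cleanest route: since $|V_j| > 12 > 6$, the vertex $v \in V_j$ has a $G$-non-neighbor issue only for $\le 6$ vertices of $V_i$, so $v$ is $G$-adjacent to some $G$-neighbor of $u$ lying in $V_i$... let me instead just note $\dist_G(u,v) \le 2$: $u$'s neighborhood in $V_j$ has size $\ge 7$ and $v$'s neighborhood in $V_i$ has size $\ge 7$; if $uv \notin E(G)$, pick any $w \in N_G(u) \cap V_j$; since $v$ has $\le 6$ non-neighbors in... no, $w$ and $v$ are both in $V_j$, adjacency within $V_j$ need not be controlled. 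Correct fix: pick $w \in N_G(u) \cap V_j$ and $w' \in N_G(v) \cap V_i$; then $w \in V_j$ has $\ge |V_i| - 6 \ge 7 > 1$ neighbors in $V_i$, and since $w'$ is one of $\le ?$... Since $|V_i| > 12$, the set $N_G(w) \cap V_i$ and $N_G(v) \cap V_i$ each miss at most $6$ vertices of $V_i$, so they intersect; pick $z$ in the intersection. Then $u - w - z - v$ — wait $w \in V_j$, $z \in V_i$: $u \in V_i$ so $uw$: yes $w \in N_G(u)$. $wz$: yes $z \in N_G(w)$. $zv$: yes $z \in N_G(v)$. That is a path of length $3$. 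The cases $u,v$ both in $V_i$ (or both in $V_j$) are handled the same way by routing through the other side: $u$ has a $G$-neighbor $w \in V_j$, $v$ has a $G$-neighbor $w' \in V_j$, and since $w, w'$ each are $G$-adjacent to all but $\le 6$ of $V_i$ (and $|V_i| > 12$), we could also route, but simpler: $N_G(w) \cap V_i$ and $N_G(w') \cap V_i$ both have size $> 6$ in a set of size $>12$... actually I just need $\dist_G(u,v) \le 3$ and $u - w - ? - v$ with $w, w' \in V_j$: take $z \in N_G(w) \cap N_G(v) \cap V_i$? We need $v \in V_i$ adjacent to something. Use: $v$ has $\ge 7$ neighbors in $V_j$, so $v$ is adjacent to some $w' \in V_j$; and $u$ adjacent to some $w \in V_j$; then $w, w' \in V_j$ and... we need $ww' \in E(G)$ or a common neighbor — but $w, w'$ each adjacent to all but $\le 6$ of $V_i \ni u$, that does not give $ww'$. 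Hmm, so route $u - w - z$ with $z \in V_i$: every $w \in V_j$ has $\ge |V_i|-6$ neighbors in $V_i$. So $N_G(w) \cap V_i$ and $\{v\} \cup (\text{stuff})$... Since $|V_i| \ge 13$ and $w$ misses $\le 6$ of $V_i$, and similarly if $v$'s $V_j$-neighbor is $w'$ which misses $\le 6$ of $V_i$, the sets $N_G(w)\cap V_i$ and $N_G(w') \cap V_i$ intersect (sizes $\ge 7$ each in a set of size $\ge 13$), giving $z$ with $u - w - z - w' - v$, length $4$ — too long. Better: don't leave $V_i$ unnecessarily. If $u,v \in V_i$: is $\dist_G(u,v) \le 2$? $u$ has $\ge 7$ $G$-neighbors in $V_j$; call this set $A \subseteq V_j$, $|A| \ge 7$. $v$ has, among $V_j$, $\ge 7$ $G$-neighbors too; call it $B$, $|B| \ge 7$. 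Both $A, B \subseteq V_j$ and $|V_j| \ge 13$, so $A \cap B \ne \emptyset$; pick $w \in A \cap B$: then $u - w - v$ is a $G$-path of length $2$. Symmetric for $u,v \in V_j$. And for $u \in V_i$, $v \in V_j$ the length-$3$ path above works (or length $1$ if $uv \in E(G)$). Hence in all cases $\dist_G(u,v) \le 3$, the desired contradiction-free conclusion — so in fact there is no contradiction needed; the bound follows directly.

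Let me restructure cleanly. The argument is direct, not by contradiction: (1) From $\Delta(Q_N) = 6$ and the flip across the edge $ij$, deduce every $w \in V_i$ satisfies $|N_G(w) \cap V_j| \ge |V_j| - 6 \ge 7$ and every $w \in V_j$ satisfies $|N_G(w) \cap V_i| \ge |V_i| - 6 \ge 7$ (if $i = j$, the loop case, the same kind of count gives each $w \in V_i$ has $\ge |V_i| - 7$ $G$-neighbors in $V_i \setminus \{w\}$, which still exceeds $6$ here, adjust constants). (2) For $u, v$ on the same side, each has $\ge 7$ $G$-neighbors on the opposite side, two sets of size $\ge 7$ inside a set of size $\ge 13$, so they share a common $G$-neighbor: $\dist_G(u,v) \le 2$. (3) For $u, v$ on opposite sides, either $uv \in E(G)$, or take $w \in N_G(u)$ on $v$'s side and $w' \in N_G(v)$ on $u$'s side and then choose a common $G$-neighbor of $w$ and $w'$ on $v$'s side using the size counts, yielding a path of length $3$ — or more simply, $u$ has $\ge 7$ neighbors on $v$'s side and $v$ has $\ge 7$ neighbors on $u$'s side, pick $w \in N_G(u)$ on $v$'s side; $w$ has $\ge 7$ $G$-neighbors on $u$'s side, which is the same side as any neighbor of $v$; since $v$ has $\ge 7$ neighbors on $u$'s side, and $w$ misses $\le 6$ of that side of size $\ge 13$, the two $\ge 7$-sets meet in some $z$, giving $u - w - z - v$. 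The main (and only mild) obstacle is bookkeeping the constants, especially handling the possible loop $i = j$ in $H$ and making sure "$>12$" is exactly enough slack; with $|V_i|, |V_j| \geq 13$ and the degree bound $6$ the numbers work out, and taking an induced subgraph does not help here since we want the bound in $G$ itself, so no further tools beyond $\Delta(Q_N)=6$ are needed.
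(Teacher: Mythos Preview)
Your argument is essentially the same as the paper's: use $\Delta(Q_N)\le 6$ together with the flip across $ij\in E(H)$ to see that every vertex of $V_i$ is $G$-adjacent to all but at most six vertices of $V_j$ (and vice versa), then pigeonhole to find a common $G$-neighbour. The paper organises it slightly differently---it counts the at most $6+6=12$ vertices of $V_j$ that are $Q_N$-adjacent to $u$ or to $v$ and picks $w$ outside that set, then for the cross case reduces to the same-side case---but the content is identical.

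One small point: your ``adjust constants'' for the loop case $i=j$ is not quite enough as written. With a loop you only get $|N_G(w)\cap V_i|\ge |V_i|-7$, and then the naive intersection bound $2(|V_i|-7)-|V_i|=|V_i|-14$ can be negative when $|V_i|=13$. The easy fix (which the paper's formulation gives for free) is to note that if $uv\notin E(G)$ then, by the flip, $uv\in E(Q_N)$, so $u$ and $v$ are each counted among the other's $\le 6$ $Q_N$-neighbours; this recovers the extra $2$ and the count goes through. Alternatively, just count ``bad'' vertices directly as the paper does: at most $12$ vertices of $V_j$ are $Q_N$-adjacent to $u$ or $v$, and any remaining $w$ works regardless of whether $i=j$.
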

We remark that in the lemma we do not require that $i \not=j$; this will not play a role in the proof.
\begin{proof}
We will prove the lemma for two of the four possible cases: (i) $u,v \in V_i$ and (ii) $u \in V_i, v \in V_j$. The arguments for the other two cases $u,v \in V_j$ and $u \in V_j, v \in V_i$ are symmetric.

(i) Assume that $u,v \in V_i$. 
We know that in $Q_N = F(G,\PP,H)$ both vertices $u$ ad $v$ have degree at most $6$, and so in particular they both have at most $6$ neighbors in $Q_N[V_j]$. Since $|V_j| > 12$, there exists $w \in V_j$ such that both $u$ and $v$ are not adjacent to $w$ in $Q_N$. Since $ij \in E(H)$, by definition of $F(G,\PP,H)$ we have that $u$ and $v$ are both adjacent to $w$ in $G$ and hence $dist_G(u,v) \le 2$. 

(ii) Assume that $u \in V_i$ and $v \in V_j$. Since in $Q_N$ the vertex $v$ has at most $6$ neighbors in $V_i$, we know that there is a non-neighbor $w$ of $v$ in $V_i$ (again, with respect to $Q_N$). Since $ij \in E(H)$, we know that $vw \in E(G)$. Since $w$ and $u$ are both in $V_u$, by (i) we have $dist_G(u,w) \le 2$, and so $dist_G(u,v) \le 3$, as desired.
\end{proof}

\begin{lemma}
\label{lem:conn}
    Let $(G,\PP,H)$ a $k$-flip structure be such that $F(G,\PP,H) = Q_N$ and $|V_i| > 12$ for each $i \in [k]$ with $V_i\not=\emptyset$. Then $G$ is connected.
\end{lemma}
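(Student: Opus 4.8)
The plan is to show that $G$ inherits connectivity from $Q_N = F(G,\PP,H)$: I will argue that every edge of $Q_N$ corresponds to a path of length at most $3$ in $G$, and then lift paths in $Q_N$ to paths in $G$. The key input is Lemma~\ref{lem:small_dist}.

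First I would record that, since every nonempty part of $\PP$ has more than $12$ vertices, Lemma~\ref{lem:small_dist} applies to \emph{every} edge $ij$ of $H$ (loops included): whenever $ij \in E(H)$ and $u, v \in V_i \cup V_j$, we get $\dist_G(u,v) \le 3$. Next, take an arbitrary edge $uv$ of $Q_N$ and set $i = \rho(u)$, $j = \rho(v)$. By the definition of $F(G,\PP,H)$, either $ij \notin E(H)$, in which case $uv \in E(G)$ and so $\dist_G(u,v) = 1$; or $ij \in E(H)$, in which case $V_i$ and $V_j$ are nonempty (they contain $u$ and $v$ respectively) and hence Lemma~\ref{lem:small_dist} yields $\dist_G(u,v) \le 3$. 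Either way, $\dist_G(u,v) \le 3$ for every edge $uv$ of $Q_N$.

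Finally, for arbitrary $x, y \in V(G) = V(Q_N)$ I would pick a path $x = w_0, w_1, \ldots, w_m = y$ in $Q_N$ (which exists since $Q_N$ is connected; the degenerate case $V(Q_N) = \emptyset$ is trivial) and conclude $\dist_G(x,y) \le \sum_{t=1}^{m} \dist_G(w_{t-1}, w_t) \le 3m < \infty$ by the triangle inequality. Hence $G$ is connected.

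I do not anticipate a genuine obstacle here; the only point deserving care is verifying that the case analysis on an edge $uv$ of $Q_N$ is exhaustive — in particular that the within-part situation $i = j$ (possibly with a loop at $i$ in $H$) is also handled. It is, precisely because Lemma~\ref{lem:small_dist} is stated without the hypothesis $i \ne j$.
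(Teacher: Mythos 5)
Your proof is correct and uses the same key ingredient as the paper, namely Lemma~\ref{lem:small_dist} applied to the endpoints of an edge of $Q_N$ whose part-pair is flipped. The paper phrases the argument as a contradiction (a disconnected $G$ would force a flipped pair spanning two components, impossible since Lemma~\ref{lem:small_dist} bounds their distance by $3$), whereas you argue directly that every $Q_N$-edge lifts to a short $G$-path and then invoke connectivity of $Q_N$; these are the same idea.
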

\begin{proof}
    Assume for contradiction that $G$ is disconnected. Since $Q_N$ is connected, the transition from $G$ to $Q_N$ has to make vertices from different components of $G$ adjacent in $Q_N$. Therefore, there exist connected components $C$ and $C'$ of $G$ such that there are vertices $u \in C$ and $v \in C'$ with $u \in V_i$, $v \in V_j$ and $ij \in E(H)$. Since both $V_i$ and $V_j$ are non-empty, we have  $|V_i| > 12$ and $|V_j| > 12$ by the assumptions of the lemma, and so by Lemma~\ref{lem:small_dist} the distance between $u$ and $v$ in $G$ is at most $3$. Thus, in particular both $u$ and $v$ are in the same connected component, a contradiction.
\end{proof}

By combining the last three lemmas we obtain the following lemma that summarizes properties of a class $\C$ of $k$-flip structures for which we have $\Q \subseteq F(\C)$. Recall that $I(H)$ denotes the set of vertices of $H$ that do not have a neighbor in $H$ and do not have loops. 

\begin{lemma}
\label{lem:properties}
    Let $\E$ be a hereditary class of $k$-flip structures such that $\Q \subseteq F(\E)$. Then $\E$ contains, for each $N \in \N$, a $k$-flip structure $(G,\PP,H)$ such that $F(G,\PP,H) = Q_N$ and
    \begin{enumerate}[(i)]
        \item for each non-empty part $V_i \in \PP$ we have $|V_i| > 12$,
        \item the graph $G$ is connected, and
        \item if $C$ is a connected component of $H - I(H)$, then $G[\cup_{i \in C}V_i]$ has diameter at most $3|C|$.
    \end{enumerate}
\end{lemma}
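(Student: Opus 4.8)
The plan is to deduce Lemma~\ref{lem:properties} by iteratively invoking the three preceding lemmas on a fixed member of $F(\E)$ that produces a cube, and using hereditariness to stay inside $\E$. First I would fix $N$ and, since $\Q \subseteq F(\E)$, pick a $k$-flip structure $(G_0,\PP_0,H_0) \in \E$ with $F(G_0,\PP_0,H_0) = Q_{27N}$ (or more safely $Q_{3^k N}$, to leave room for repeated applications of Lemma~\ref{lem:only_large_classes}). Then I would repeatedly apply Lemma~\ref{lem:only_large_classes}: as long as there is a non-empty part $V_i$ with $|V_i| \le 12$, replace the current structure by the $k$-flip substructure guaranteed by that lemma, which zeroes out $V_i$, still produces a cube (of side length one third of the previous one), and by hereditariness still lies in $\E$. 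Each step eliminates one small part and there are at most $k$ parts, so after at most $k$ steps we obtain $(G,\PP,H) \in \E$ with $F(G,\PP,H) = Q_N$ (after choosing the initial side length appropriately, e.g.\ $3^k N$) in which \emph{every} non-empty part has size $>12$; note a part once emptied stays empty and emptied parts get isolated in $H$, so they never reappear as small non-empty parts. This gives property~(i).

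Next, property~(ii) is immediate: the structure $(G,\PP,H)$ we have just produced satisfies the hypothesis of Lemma~\ref{lem:conn} (it produces a cube and all non-empty parts have size $>12$), so $G$ is connected. For property~(iii), fix a connected component $C$ of $H - I(H)$ and consider $G[\cup_{i \in C} V_i]$. All parts $V_i$ with $i \in C$ are non-empty (empty parts are forced to be isolated in $H$, hence cannot lie in a nontrivial component of $H - I(H)$), so each has size $>12$. For any edge $ij$ of $H$ with $i,j \in C$, Lemma~\ref{lem:small_dist} gives $\dist_G(u,v) \le 3$ for all $u,v \in V_i \cup V_j$. Since $C$ is connected, any two vertices of $\cup_{i \in C} V_i$ can be linked by walking along at most $|C|-1$ edges of $C$, each contributing a $G$-path of length at most $3$; moreover these intermediate paths can be taken to stay within $\cup_{i\in C} V_i$ because Lemma~\ref{lem:small_dist}'s short paths use only vertices of the two parts involved. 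Hence the diameter of $G[\cup_{i \in C} V_i]$ is at most $3|C|$, giving~(iii).

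The only mildly delicate points are bookkeeping ones: making sure the initial cube is large enough that $k$ successive applications of Lemma~\ref{lem:only_large_classes} still leave a cube of side length $\ge N$ (so start from $Q_{3^k N}$, or relabel at the end since graphs isomorphic to cubes are treated as cubes), and checking that the short paths from Lemma~\ref{lem:small_dist} indeed lie inside the induced subgraph $G[\cup_{i\in C}V_i]$ rather than escaping it --- this follows because the witnessing non-neighbor $w$ in the proof of Lemma~\ref{lem:small_dist} is always chosen inside $V_i$ or $V_j$. I do not expect a genuine obstacle here; the lemma is essentially a clean assembly of the three preceding lemmas together with the hereditariness of $\E$.
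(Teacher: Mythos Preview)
Your proposal is correct and follows essentially the same approach as the paper: start from a $k$-flip structure producing $Q_{3^kN}$, repeatedly apply Lemma~\ref{lem:only_large_classes} to eliminate small parts (using hereditariness), then invoke Lemma~\ref{lem:conn} for connectedness and concatenate the length-$3$ paths from Lemma~\ref{lem:small_dist} along a path in $C$ to bound the diameter. You are in fact slightly more careful than the paper in noting that the short paths from Lemma~\ref{lem:small_dist} stay inside $V_i\cup V_j\subseteq \bigcup_{\ell\in C}V_\ell$, which is needed since the claim concerns the diameter of the induced subgraph.
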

\begin{proof}
(i) Let $N \in \N$ be arbitrary.
Let $(G,\PP,H)$ be such that  $F(G,\PP,H) = Q_{3^kN}$. By repeated application of Lemma~\ref{lem:only_large_classes} we can extract from $(G,\PP,H)$ a $k$-flip substructure $(G', \PP', H')$ such that $F(G', \PP', H') = Q_N$ in which each $V_i' \in \PP'$ is either empty or has more than $12$ vertices.

 (ii) Since all non-empty parts of $\PP'$ have more than $12$ vertices, by Lemma~\ref{lem:conn} we know that $G'$ is connected.

 (iii) Let $C$ is a connected component of $H'- I(H')$. Let $u,v \in \cup_{i \in C}V_i$. Since $C$ is connected, there is a path $P$ in $C$ between $\rho(u)$ to $\rho(v)$; let $m$ be the length of $P$. Path $P$ corresponds to a sequence $V_{i_1},\ldots, V_{i_m}$ of non-empty sets of sets from $\PP'$, where $i_j$ is the $j$-th vertex of $P$, and $u \in V_{i_1} = \rho(u)$, $v \in V_{i_m} = \rho(v)$. From each $V_{i_j}$ pick we pick one vertex $w_j$, where we set $w_1 = u$, $w_m = v$, and for $j \in [2, m-1]$ we pick $w_j \in V_{i_k}$ arbitrarily.
Since every $V_{i_j}$ has more than $12$ vertices, we can apply Lemma~\ref{lem:small_dist} to every pair $w_j, w_{j+1}$, and we obtain a walk of length at most $3m$ from $u$ to $v$. From this walk we can extract a path of length at most $3m$ from $u$ to $v$. Since $m \le |C|$, this finishes the proof.
\end{proof}

\subsection{Finishing the proof}

The proof of Theorem~\ref{thm:reduction_bd_range} will follow easily from the following lemma.
\begin{lemma}
\label{lem:flip_bd_range}
Let $\E$ be a hereditary class of $k$-flip structures such that $\Q \subseteq F(\E)$. Then there exists a transduction $\ST$ of bounded range such that $\Q \subseteq \ST(\D)$, where $\D = \{G~|~(G, \PP, H ) \in \E\}$.
\end{lemma}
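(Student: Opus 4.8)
The plan is to build the bounded-range transduction $\ST$ by casework on a single structural parameter of the $k$-flip structures producing cubes: the sizes of the ``flipped'' parts, equivalently the diameters of the subgraphs $G[U(C)]$ as $C$ ranges over connected components of $H-I(H)$. By Lemma~\ref{lem:properties}, for each $N$ we may pick a witness $(G,\PP,H)\in\E$ with $F(G,\PP,H)=Q_N$, with $G$ connected, with every nonempty part of size $>12$, and with every $G[\cup_{i\in C}V_i]$ of diameter at most $3|C|\le 3k$. I would first split the argument according to whether, for arbitrarily large $N$, the ``untouched'' region $S=\bigcup_{i\in I(H)}V_i$ contains a large ball of $G$ or not; this is exactly the dichotomy previewed in the outline.

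\emph{Case 1: large untouched balls exist.} Suppose that for every $r$ there is an $N$ and a witness $(G,\PP,H)$ with $F(G,\PP,H)=Q_N$ admitting a vertex $v$ with $N_r^G[v]\subseteq S$. Then by Lemma~\ref{lem:induced_subgrid}, $G$ contains $Q_{\lfloor r/3\rfloor}$ as an induced subgraph. Since $r$ is unbounded, the class $\D$ contains arbitrarily large cubes as induced subgraphs, so the trivial transduction $\ST$ that just takes an induced subgraph (range~$0$, no coloring, no formula rewriting edges) already satisfies $\Q\subseteq\ST(\D)$.

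\emph{Case 2: no large untouched balls.} Otherwise there is a radius $r_0$ such that for all $N$ and all witnesses as above, \emph{every} vertex $v$ of $G$ has $N_{r_0}^G[v]\not\subseteq S$, i.e.\ within distance $r_0$ of every vertex there is a vertex belonging to some flipped part $U(C)$. Here I would argue that the diameter of $G$ is bounded by a function of $k$ and $r_0$ only. First, each $G[U(C)]$ has diameter at most $3k$ by Lemma~\ref{lem:properties}(iii), and there are at most $k$ such components, so the union of the flipped regions, while possibly disconnected in $G$, consists of at most $k$ ``clusters'' each of $G$-diameter $\le 3k$. By the Case~2 hypothesis every vertex of $G$ is within distance $r_0$ of one of these clusters. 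It remains to bound the pairwise $G$-distance between clusters: since $F(G,\PP,H)=Q_N$ is connected and $G$ is connected, a shortest $G$-path between two clusters can be ``short-circuited'' using the fact that any long path in $G$ must repeatedly come within distance $r_0$ of the flipped clusters, and consecutive visits to the \emph{same} cluster can be replaced by a detour of length $\le 3k$ through that cluster; a counting/pigeonhole argument over the $\le k$ clusters then caps the distance between any two vertices of $G$ by something like $O(k^2 r_0)$. Hence $\D$ has bounded diameter, say $D_0$. Now take $\ST$ to be the transduction that guesses, via $k$ colors, the partition $\PP$ and (via a bounded amount of extra marking) the graph $H$, and whose interpretation formula outputs $uv$ iff $\bigl(E(u,v)\bigr)\oplus\bigl(H\text{-adjacency of the colors of }u,v\bigr)$; i.e.\ it directly reconstructs $F(G,\PP,H)$. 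Because every vertex of $G$ lies within distance $D_0$ of every other, this formula trivially has range $\le D_0$, so $\ST$ is a bounded-range transduction, and $Q_N=F(G,\PP,H)\in\ST(G)\subseteq\ST(\D)$ for every $N$.

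\emph{Main obstacle.} The routine parts are Case~1 and the description of $\ST$ in Case~2; the real work is the diameter bound in Case~2 — specifically, turning ``every ball of radius $r_0$ meets a flipped cluster'' plus ``each flipped cluster has small internal diameter'' plus ``$G$ and $Q_N$ are connected'' into a uniform bound on $\mathrm{diam}(G)$ independent of $N$. The delicate point is that the flipped clusters $U(C)$ need not be connected \emph{in $G$} a priori (only in $G$ after flipping), so one has to be careful that Lemma~\ref{lem:properties}(iii) really does give $G$-diameter (it does, as stated), and then handle how a would-be long geodesic of $G$ is forced to interact with at most $k$ bounded-diameter clusters. I expect this is where a short but slightly fiddly pigeonhole/short-circuiting argument is needed, and it is the heart of the lemma.
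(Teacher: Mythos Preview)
Your plan matches the paper's proof almost exactly: the same dichotomy on whether large balls sit entirely inside the untouched region $S=\bigcup_{i\in I(H)}V_i$, the same appeal to Lemma~\ref{lem:induced_subgrid} in Case~1, and in Case~2 the same strategy of bounding $\mathrm{diam}(G)$ and then encoding $(\PP,H)$ by colors so that a single formula reconstructs $F(G,\PP,H)$. For the diameter bound itself, your geodesic/pigeonhole sketch is a legitimate alternative to what the paper does: the paper instead builds an auxiliary graph $K$ whose vertices are the connected components $C$ of $H-I(H)$, with $CC'\in E(K)$ whenever $\dist_G(U(C),U(C'))\le 2\alpha+1$, and shows $K$ is connected by a midpoint argument on a shortest $W_1$--$W_2$ path across an arbitrary bipartition of the components. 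Both arguments use exactly the same two ingredients (every vertex is within $r_0$ of some $U(C)$; each $G[U(C)]$ has diameter $\le 3k$) and yield the same kind of $O(kr_0+k^2)$ bound, so this is a cosmetic difference rather than a different route.

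There is, however, one genuine slip in your Case~2. The sentence ``Hence $\D$ has bounded diameter, say $D_0$'' is false: $\D=\{G\mid (G,\PP,H)\in\E\}$ can contain graphs of arbitrary diameter; only your \emph{chosen witnesses} are shown to have diameter $\le D_0$. Consequently the interpretation formula you describe, namely $E(x,y)\oplus(\text{colors of }x,y\text{ adjacent in }H)$, is \emph{not} of bounded range as a property of the formula---it will happily create edges between arbitrarily distant vertices in other inputs. The paper fixes this by having the formula first test whether $\mathrm{diam}(G)\le D_0$ and, if not, simply output $E(x,y)$; equivalently you could conjoin your formula with $\dist(x,y)\le D_0$. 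Either modification leaves the output on the witnesses unchanged (since they do have diameter $\le D_0$) while making the formula have range $\le D_0$ on \emph{all} inputs. This is a small patch, but as written your justification ``Because every vertex of $G$ lies within distance $D_0$ of every other, this formula trivially has range $\le D_0$'' confuses a property of the specific witnesses with the required universal property of the formula.
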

\begin{proof}
For every $k$-flip structure $(G,\PP,H)$ we define a number $\lambda(G, \PP, H)$ by setting

$$\lambda(G,\PP,H):=  \max\{r \in \N_0~|~\exists v \in V(G) \text{ with } N^G_{r}[v] 
\subseteq  \cup_{i \in I(H)} V_i\}$$

We distinguish two cases:

1) For every $n$ there is a $k$-flip structure $(G,\PP,H) \in \E$  with $\lambda(G,\PP,H) \ge n$ such that $F(G,\PP,H) = Q_N$ for some $N$. In this case, by Lemma~\ref{lem:induced_subgrid} we know that $G$ contains $Q_{n/3}$ as an induced subgraph. Consequently, graphs from the class $\D$ from the statement of the lemma contain all cubes as induced subgraphs. 
Thus, the transduction $\ST$ that does not do anything except for taking induced subgraphs (meaning that there is no coloring step and we use the trivial interpretation that leaves the graph unchanged) has the property that $\Q \subseteq \ST(\D)$. Clearly, the interpretation formula $\psi(x,y) = E(x,y)$ is of bounded range, as desired.

2) There exists a number $\alpha$ such that $\lambda(G,\PP,H) \le \alpha$ for all $(G,\PP,H) \in \E$ with $F(G,\PP,H) = Q_N$. This means that for every $(G,\PP,H) \in \E$ and every $v \in V(G)$ there exists $i \in V(H) \setminus I(H)$ and $w \in V_i$ such that $dist_G(v,w) \le \alpha$. Our goal will be to show that in this case, for every $(G,\PP,H) \in \E$ with $F(G, \PP, H) = Q_N$ the graph $G$ has  diameter at most $2k\alpha + 4k$. Then, we can consider as $\ST$ the following transduction that produces $Q_N$ from $G$: We mark vertices of $G$ with unary predicates to encode the partition $\PP$ and use one extra predicate to remember the graph $H$ from $(G, \PP, H)$ (this works because there are only finitely many possible graphs on $k$ vertices; we can for example mark every vertex of $G$ with the predicate that encodes $H$). Then the interpretation formula $\psi(x,y)$ checks whether the diameter of $G$ is at most $2k\alpha + 4k$, and if yes, it produces $F(G,\PP,H)$ from $G$. Otherwise the formula just keeps the original edges of $G$. This formula is clearly of range at most $2k\alpha + 4k$.

For a connected component $C$ of $H-I(H)$, we will denote by $U(C)$ the subset of $V(G)$ given by $\cup_{i\in C} V_i$. 
We define an auxiliary graph $K$ as follows: The vertices of $K$ are the connected components of $H - I(H)$ and there is an edge between two components $C$, $C'$ if there is a path of length at most $2\alpha+1$ in $G$ between a vertex in $U(C)$ and a vertex in $U(C')$. We will show that  graph $K$ is connected. This will imply that $G$ has diameter at most $2k\alpha + 4k$ as follows. Let $u,v \in V(G)$ be arbitrary. First consider the case when $u \in U(C)$ and $v \in U(C')$ for some connected components of $H - I(H)$. Since $K$ is connected, there exists path from $C$ to $C'$ in $K$ of length at most $k$ (since $k$ is the number of vertices in $H$). Using the fact that between any two consecutive components on the path there is a path of length at most $2\alpha+1$ in $G$, and that by Lemma~\ref{lem:properties} there is a path of length at most $3$ between any two vertices within any connected component, we can connect $u$ and $v$ by a path of length at most $(k-1)(2\alpha + 1) + 3k$. Second, if $u \in \cup_{i \in I(H)}C_i$ or $v \in \cup_{i \in I(H)}C_i$ (or both), we know that there exists $i \in V(H) \setminus I(H)$ and $u'\in U(C_i)$ such that $dist_G(u,u') \le \alpha$, and similarly we have $dist_G(v,v') \le \alpha$ for some $j$ and $v' \in V_j$. As before, we know that $dist_G(u',v')\le (k-1)(2\alpha +1) + 3k$, and so we can upper bound $dist_G(u,v)$ by $2k\alpha + 4k$.

We now proceed with proving that the graph $K$ is connected.
If $K$ has only one vertex (that is, if $H-I(H)$ has only one connected component), then $K$ is clearly connected. From now on we will assume that $K$ has more than $1$ vertex.
Let $\F = \{C_1,\ldots, C_m\}$ be the connected components of $H - I(H)$. Let $S_1,S_2$ be any bipartition of $\F$ with both $S_1$, $S_2$ non-empty. Let $$W_1 = \bigcup_{C \in S_1}U(C) \text{ and } W_2 = \bigcup_{C \in S_2}U(C).$$ We define the distance $dist(S_1,S_2)$ between $S_1$ and $S_2$ to be the minimum of $dist_G(u,v)$ over all $u \in W_1$ and $v \in W_2$. Note that since $G$ is connected, we know that $dist(S_1,S_2) < \infty$. Let $u \in W_1$ and $v \in W_2$ be such that they minimize $dist(u,v)$ among all such $u,v$. We claim that $dist(S_1,S_2) \le 2\alpha + 1$. Let $P$ be a path of shortest length from $u$ to $v$ in $G$. 
 Let $w$ be a  vertex of $P$ such that $|dist_G(u,w) - dist_G(w,v)| \le 1$; such vertex exists in any path. From our assumption $\lambda(G,\PP,H) \le \alpha$ we know that there exists $w' \in W_1\cup W_2$ such that $dist_G(w,w') \le \alpha$. If $w' \in W_1$, then $dist_G(u,w) \le dist_G(w',w)$, because if $dist_G(w',w) < dist_G(u,w)$ then by concatenating paths from $w'$ to $w$ and from $w$ to $v$ we would get a path from $W_1$ to $W_2$ that is shorter than $P$). By analogous reasoning, if $w' \in W_2$, then $dist_G(v,w) \le dist_G(w',w)$. Thus, we get that 
 one of $dist_G(u,w)$ or $dist_G(w,v)$ is at most $\alpha$, and since the difference between these two distances is at most $1$, we have $dist_G(u,v) \le 2\alpha + 1$.
\end{proof}

\begin{proof}[Proof of Theorem~\ref{thm:reduction_bd_range}]
    Let $\C$ be a class of graphs such that $Q \subseteq \T(\C)$ for some transduction $T$. By Theorem~\ref{thm:gaifman_bd_range}, there exists a transduction $R$ of bounded range and $k \in \N$ such that every $Q_N$ is a $k$-flip of some graph $G$ from $R(\C)$. 
    Let $\E$ be the class of all possible $k$-flip structures $(G,\PP,H)$ such that $G \in R(\C)$. Then clearly $\Q \subseteq F(\E)$ and $\E$ is hereditary, since $R(\C)$ is hereditary.  
    By Lemma~\ref{lem:flip_bd_range} there exists a transduction $S$ of bounded range such that $\Q \subseteq S(R(\C))$. Since the composition of two transductions of bounded range is a transduction of bounded range, we get that $\Q \subseteq T'(\C)$, where $T' = S \circ R$ is a transduction of bounded range, as desired.
\end{proof}

\section{Proof Theorem~\ref{thm:main}}
\label{sec:finish}

Following~\cite{Dvorak_grids}, we define $\wh{Q}_N$ to be the supergraph of $Q_N$ obtained by adding all non-decreasing diagonals to all $1\times 1 \times 1$ cubes in $Q_N$. That is, besides keeping all edges of $Q_N$, in $\wh{Q}_N$ every vertex $(i,j,k)$ is adjacent also to vertices $(i+1,j+1,k)$, $(i+1,j, k+1)$, $(i, j+1, k+1)$ and $(i+1,j+1, k+1)$, whenever these vertices exist. We denote the class consisting of all $Q_N$ for $N \in \N$ by $\wh{\Q}$. 

We will use on the following result of Berger, Dvo\v{r}\'ak and Norin:
\begin{theorem}[\cite{Dvorak_grids}]
\label{thm:grids_dvorak}
    For every $t$ there exists $N$ such that for every bipartition $A_1, A_2$ of $V(\wh{Q}_N)$ we have that $tw(\wh{Q}_N[A_1]) > t$ or $tw(\wh{Q}_N[A_2]) > t$. 
\end{theorem}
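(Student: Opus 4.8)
I would start from the observation that $\wh{Q}_N$ is exactly the $1$-skeleton of the Freudenthal triangulation of the cube $[1,N]^3$: inside each unit cell the original grid edges together with the added non-decreasing diagonals are precisely the edges of the six tetrahedra of that triangulation. Thus $\wh{Q}_N$ is the board of $3$-dimensional Hex, and the natural tool is the Hex/Sperner-type dichotomy. Concretely, for a bipartition $A_1\cup A_2 = V(\wh{Q}_N)$ and an axis $\ell\in\{1,2,3\}$, either $A_1$ contains a path joining the opposite faces $\{x_\ell=1\}$ and $\{x_\ell=N\}$, or $A_2$ contains a set separating these two faces in $\wh{Q}_N$; and when the separator is taken minimal, its geometry as a spanning $2$-dimensional membrane forces it to meet all four remaining boundary faces, so in that case $A_2$ is ``large in the two transverse directions''. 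Running this for all three axes shows that one colour class, say $A_1$, wins with respect to at least two axes.

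Since a couple of crossing paths say nothing about treewidth, the plan is to produce \emph{many} monochromatic crossing structures in one colour class and then build a treewidth certificate. I would slice $\wh{Q}_N$ into the $\Theta(N)$ axis-parallel planes in each of two transverse directions; every such plane induces a triangulated grid, hence a $2$-dimensional Hex board, so the planar Hex theorem applies inside it. After a pigeonhole argument over these $\Theta(N)$ planes one obtains a single colour class, say $A_1$, that contains $\Omega(N)$ pairwise disjoint monochromatic paths crossing in a direction $u$ and $\Omega(N)$ more crossing in a transverse direction $v$. The goal is then to combine these two families --- using the diagonal edges running between consecutive planes, which is exactly the feature absent from the plain $3$-dimensional grid --- into an $r\times r$ grid minor, or a well-linked set, or a bramble, of order $\Omega(N^{\varepsilon})$ inside $\wh{Q}_N[A_1]$; this forces $tw(\wh{Q}_N[A_1]) > t$ once $N$ is large enough relative to $t$.

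I expect the assembly step to be the main obstacle, and it is the only place where the diagonals are genuinely used: in three dimensions two ``flat'' monochromatic paths lying in perpendicular planes need not intersect, so one cannot simply overlay the two families to obtain a grid. One must instead either extract $2$-dimensional monochromatic \emph{sheets} from the Sperner argument and intersect a sheet transversally with a family of paths, or prove the contrapositive directly --- that if $tw(\wh{Q}_N[A]) \le t$ then $A$ is so ``thin'' that $\wh{Q}_N \setminus A$ still contains a large grid-like region in some cross-section. A further technical point is that minimal Sperner separators in a triangulated $3$-ball can be disconnected, so ``$A_2$ separates two faces'' has to be upgraded to a usable connected (or itself high-treewidth) object by an additional argument. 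Keeping the quantitative dependence of $N$ on $t$ under control through all of this is the real work, and it is this that is carried out in~\cite{Dvorak_grids}.
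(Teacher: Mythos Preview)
There is nothing to compare against: the paper does not prove Theorem~\ref{thm:grids_dvorak}. It is stated as an external result of Berger, Dvo\v{r}\'ak and Norine~\cite{Dvorak_grids} and is simply invoked as a black box in the proof of Theorem~\ref{thm:main}. So no proof is expected from you here, and your sketch cannot be checked against anything in the present paper.

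That said, since you have written a substantial outline, a brief comment on its status: your Hex/Sperner framing is a reasonable heuristic for why \emph{some} colour class must be ``spread out'', but you correctly identify the real difficulty yourself --- the assembly step turning two transverse families of monochromatic paths into a large-treewidth certificate is not carried out, and the obstacles you list (non-intersecting paths in perpendicular planes, disconnected Sperner separators, quantitative control) are genuine. As it stands, your text is a plausibility argument rather than a proof, and you acknowledge as much in your final sentence. If the intent was to supply a self-contained proof, this is incomplete; if the intent was merely to motivate the citation, a one-line reference to~\cite{Dvorak_grids} is all the paper requires.
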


\begin{lemma}
\label{lem:diag_cubes}
    There exists a transduction $\RT$ of range $3$ such that $\wh{\Q}_N \in R(\Q_N)$ for every $N$.
\end{lemma}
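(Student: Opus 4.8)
The plan is to exhibit $\wh{Q}_N$ as the result of a fixed, bounded-range transduction applied to $Q_N$. The key observation is that $\wh{Q}_N$ is obtained from $Q_N$ purely by adding edges, and every edge we add connects two vertices that are already at distance at most $3$ in $Q_N$: indeed, for the ``face diagonals'' like $(i,j,k)(i+1,j+1,k)$ the two endpoints are joined in $Q_N$ by the length-$2$ path through $(i+1,j,k)$, and for the ``space diagonal'' $(i,j,k)(i+1,j+1,k+1)$ they are joined by the length-$3$ path through $(i+1,j,k)$ and $(i+1,j+1,k)$. So $\dist_{Q_N}(u,v)\le 3$ holds for every pair $\{u,v\}$ that becomes adjacent in $\wh{Q}_N$ but was not adjacent in $Q_N$, and of course also for every pair that was already adjacent.

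The main technical point is to write down a single first-order interpretation formula $\psi(x,y)$, not depending on $N$, such that $\psi(Q_N)=\wh{Q}_N$, and to check that it has range $3$. First I would note that since $\wh{Q}_N\supseteq Q_N$, the formula $\psi(x,y)$ can be taken of the form $E(x,y)\lor \delta(x,y)$, where $\delta(x,y)$ expresses ``$x$ and $y$ are opposite corners of some axis-aligned $1\times1$, $1\times1\times1$ face diagonal or body diagonal of the unit cube, in the non-decreasing orientation''. The point is that this relation is definable locally from the edge relation of $Q_N$: being a face diagonal in a fixed orientation can be detected by looking at the $3$-ball around $\{x,y\}$ and identifying the two common neighbours of $x$ and $y$ together with the coordinate pattern, and a body diagonal is detected by a length-$3$ pattern of intermediate vertices. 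One has to be slightly careful that the orientation (``non-decreasing'') is recoverable, which is why the transduction is allowed to use colours: one can colour the three ``axis directions'' or, more simply, colour vertices so that the three coordinate-successor relations are first-order definable, and then $\delta$ just says there exist intermediate vertices witnessing the appropriate monotone diagonal. Since $\psi(x,y)$ implies $\dist_{Q_N}(x,y)\le 3$ by the observation above, $\psi$ has range $3$, hence $\RT$ has range $3$.

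I would organise the write-up as: (1) state the three added-edge types and verify the distance bound of $3$ for each; (2) describe the colouring used by the transduction (enough to make the three ``direction successor'' relations FO-definable from $E$, e.g.\ by colouring according to parities or a small pattern, or simply using the unary predicates to mark a consistent orientation), and write the formula $\psi(x,y) = E(x,y)\lor\delta(x,y)$; (3) verify $\psi(Q_N)=\wh{Q}_N$ and that $\psi$ has range $3$, which gives $\wh{Q}_N\in\RT(Q_N)$.

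The step I expect to be the main obstacle is arranging, with only a bounded number of colours and a single $N$-independent formula, that the \emph{orientation} data (which coordinate is which, so that ``non-decreasing diagonal'' is well-defined) is recoverable by $\delta$. Locally a vertex of $Q_N$ sees an undirected piece of the grid and cannot by itself tell the three axes apart or orient them; the fix is to let the transduction's colouring step supply this — e.g.\ mark, for each of the three axes, the set of vertices with even coordinate in that axis — after which the successor-in-direction-$\ell$ relation becomes FO-definable and $\delta$ can be written. Making this colouring choice clean and checking it suffices (and is consistent across the whole grid) is the part that needs care; everything else is a routine pattern-matching argument within the $3$-ball.
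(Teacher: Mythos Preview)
Your proposal is correct and follows essentially the same approach as the paper: write $\psi(x,y)=E(x,y)\lor\delta(x,y)$, use unary colours to make the three coordinate-successor relations first-order definable, and then express the diagonal edges as short successor chains; range $3$ follows since every new edge lies at distance at most $3$ in $Q_N$.

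One small point worth flagging: your suggested colouring by \emph{parities} (residues mod $2$) in each axis is not quite enough. Mod-$2$ colours let you detect along which axis an edge $uv$ goes, but $(i-1,j,k)$ and $(i+1,j,k)$ receive the same $X$-colour, so you cannot recover the \emph{direction} of the step; with only parities your $\delta$ would also pick up the decreasing diagonals. The paper resolves exactly the issue you anticipated by colouring each coordinate modulo $3$ (predicates $X_0,X_1,X_2,Y_0,Y_1,Y_2,Z_0,Z_1,Z_2$), after which ``$y$ is the $x$-successor of $x$'' is expressible as $E(x,y)$ together with the appropriate cyclic shift of $X$-colour, and similarly for the other axes. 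Your ``or a small pattern'' hedge is right; mod $3$ is the pattern that works.
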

\begin{proof}
    Recall that on $Q_N$ we have three projection functions $\pi_1,\pi_2,\pi_3$ such that for $v = (i,j,k)$ we have $\pi_1(v) = 1$, $\pi_2(v) = j$ and $\pi_3(v) = k$. We say that vertex $v$ is a $x$-successor of vertex $u$ if $\pi_1(v) = \pi_1(u) + 1$ and $\pi_2(v) = \pi_2(u)$, $\pi_3(v) = \pi_3(u)$. We define $y$-successor and $z$-successor analogously. Then the defining conditions on diagonal edges of $\wh{Q}_N$ can be expressed in terms of $x$-, $y$-, and $z$-successors. For example, for $u = (i,j,k)$ and $v = (i+1,j+1,k)$ we have that there exists $w$ such that $w$ is a $x$-successor of $u$ and $v$ is a $y$-successor of $w$. 

    We will show that one can equip $Q_N$ in such a way that being a $x$-, $y$-, and $z$-successor can be expressed by a first-order formula.
   We assign unary predicates $X_0,X_1,X_2, Y_0,Y_1,Y_2,\\ Z_0,Z_1,Z_2$ to vertices of $Q_N$ as follows. For $a,b,c \in \{0,1,2\}$ we mark vertex  $(i,j,k)$ with $X_a$ if $i \equiv a ~(mod ~3)$, with $Y_b$ if $j \equiv a ~(mod ~3)$, and with $Z_c$ if $k \equiv a ~(mod ~3)$. Then we can check  for two vertices $u,v$ of $Q_N$ whether $v$ is a $x$-successor of $u$ by setting
   $$succ_X(x,y):= E(x,y) \land \bigvee_{\substack{a,b \in \{0,1,2\} \\ a \equiv b + 1 (mod 3)}}   (X_a(x) \land X_b(y) )$$
   Formulas $succ_y(x,y)$ and $succ_z(x,y)$ are defined analogously. Then, as indicated above, one can express for example whether $\pi_1(u) = \pi_1(v)+1$, $\pi_2(u) = \pi_2(v) + 1$ and $\pi_3(u) = \pi_3(v)$ by $\exists z. succ_x(x,z) \land succ_y(z,y)$. One can therefore write a FO formula $\beta_{diag}(x,y)$ that expresses whether two vertices $u,v$ of $Q_N$ should form a non-decreasing diagonal, and then set $\psi(x,y):=E(x,y) \lor \beta_{diag}(x,y)$. Then we have $\wh{Q}_N = \psi(Q_N)$, and formula $\psi(x,y)$ has range $3$, as desired.
\end{proof}


\begin{proof}[Proof of Theorem~\ref{thm:main}]
  Let $\C$ be a class of graph of bounded genus and assume for contradiction that $\Q \subseteq \T(\C)$ for some transduction $\T$. Then, by Theorem~\ref{thm:reduction_bd_range} there exists a transduction $\T'$ of bounded range such that $\Q \subseteq \T'(\C)$. By Lemma~\ref{lem:diag_cubes} we know that $\wh{\Q} \subseteq \RT(\T'(\C))$, and since the composition of two bounded range transduction is a transduction of bounded range, we have that $\wh{\Q} \subseteq S(\C))$, where $\ST = \RT \circ \T$ is a transduction of bounded range.  Since $\C$ has bounded genus, by Lemma~\ref{lem:f-decompositions} we know that $\wh{\Q}$ has $f$-decompositions for some function $f$. Since the maximum degree of graphs from $\wh{\Q}$ is at most $14$, we have that $K_{15,15}$ is not a subgraph of any graph from $\wh{\Q}$. 
  Set $t:=g(f(1),t)$, where $g$ is the function from Theorem~\ref{thm:cw_sparse_tw}. Let $N$ be obtained from applying Theorem~\ref{thm:grids_dvorak} to $t$. Let $V_1, \ldots, V_\ell$ be a $f$-decomposition of $\wh{Q}_N$. 
  Set 
  $$A_1:= \bigcup_{\substack{i \in [\ell] \\ \text{$i$ even}}} V_i $$
  and $A_2:=V(\wh{Q}_N)\setminus A_1$. 
  Since for each $i$ the graph $\wh{Q}_N[V_i]$ has clique-width at most $f(1)$ and is $K_{15,15}$-free, its  treewidth is at most $t$. Since  $\wh{Q}_N[A_1]$ is a disjoint union of graphs $\wh{Q}_N[V_i]$, we have that $tw(\wh{Q}_N[A_1]) \le t$. The same argument yields that $tw(\wh{Q}_N[A_2]) \le t$. This is a contradiction to Theorem~\ref{thm:grids_dvorak}.
\end{proof}

\bibliography{biblio}
\bibliographystyle{abbrv}

\end{document}